\newtheorem{theorem}{Theorem}
\newtheorem{lemma}{Lemma}
\newtheorem{definition}{Definition}
\newtheorem{proof}{Proof}[section]
\begin{document}

\title{Core-periphery Detection Based on Masked Bayesian Non-negative Matrix Factorization}

\author{Zhonghao Wang, Ru Yuan, Jiaye Fu, Ka-Chun Wong, Chengbin Peng
	
	\thanks{Zhonghao Wang, Ru Yuan and Chengbin Peng are with College of Information Science and Engineering, Ningbo University, Ningbo 315200, China, and also with the Key Laboratory of Mobile Network Application Technology of Zhejiang Province, Ningbo 315200, China (e-mail: wongzhonghao123@gmail.com; shanshuiqiankun@gmail.com; pengchengbin@nbu.edu.cn).}
	\thanks{Ka-Chun Wong is with Department of Computer Science, City University of Hong Kong, Hong Kong 999077, China (email: 	kc.w@cityu.edu.hk).}
	\thanks{Jiaye Fu is with School of Management, Xi'an Polytechnic University, Xi'an 710048, China (email: 3220529324@qq.com).}
}

\markboth{}%
{Shell \MakeLowercase{\textit{et al.}}: A Sample Article Using IEEEtran.cls for IEEE Journals}

\maketitle

\begin{abstract}
	
Core-periphery structure is an essential mesoscale feature in complex networks. Previous researches mostly focus on discriminative approaches while in this work, we propose a generative model called masked Bayesian non-negative matrix factorization. We build the model using two pair affiliation matrices to indicate core-periphery pair associattions and using a mask matrix to highlight connections to core nodes. We propose an approach to infer the model parameters, and prove the convergence of variables with our approach. Besides the abilities as traditional approaches, it is able to identify core scores with overlapping core-periphery pairs. We verify the effectiveness of our method using randomly generated networks and real-world networks. Experimental results demonstrate that the proposed method outperforms traditional approaches. 
\end{abstract}

\begin{IEEEkeywords}
	core-periphery detection, non-negative matrix factorization, complex networks
\end{IEEEkeywords}

\section{Introduction}

\IEEEPARstart{C}omplex networks are frequently utilized to represent real-world systems in various fields, including social relations\cite{social}, biological interactions\cite{social_biological}, and brain networks\cite{brain_networks}, can be modeled as complex networks.  The investigation of complex network topology mainly focused on the global, mesoscale, and local structure of the network. Community structure\cite{newman2004finding}, core-periphery structure\cite{borgatti2000models}, and hierarchical structure\cite{peixoto2014hierarchical} are typical types of mesoscale structures. 

The core-periphery structure is a distinct form of community structure in which there are two partitions: the core nodes and the periphery nodes. 
In a same core-periphery pair, core nodes are densely connected while the connection between periphery nodes are sparse.
Figure.\ref{CP-structure} illustrates the core-periphery structure,
where yellow and blue nodes denote core and periphery nodes, respectively. As shown in the figure, each network may have several core-periphery pairs, and each pair may containing multiple core nodes. The concept of the core-periphery structure was first proposed by  Krugman et al. for economic analysis\cite{krugman1991increasing}, and then formalized by Borgatti and Everett \cite{borgatti2000models}. Recent studies have revealed many applications in analyzing collaboration networks \cite{CPcollab}, economic networks\cite{CPeconomic}, traffic networks \cite{CPtraffic}, word networks\cite{sarkar2022core}, and trading networks\cite{CPtrade}.
Some recent methods distinguish the core and periphery nodes by binary classification\cite{kojaku2017finding,brusco2011exact,zhang2015identification}. 
Some others measure a quantitative likelihood that each node is a core node 
\cite{borgatti2000models,minres,yan2019multicores,da2008centrality}.

Many approaches  have been devised for core-periphery detection. 
For example, Shen et al.\cite{shen2021finding} consider the core–periphery detection as a likelihood maximization problem, and proposed the C–P score maximization algorithm to detect core-periphery pairs. Jia et al.\cite{jia2019random} proposed a core score inference algorithm via likelihood maximization. Zhang et al.\cite{zhang2015identification} proposed a expectation–maximization algorithm to infer the parameter of stochastic block model of core–periphery structure. 
These approaches generally  estimate model parameters by maximizing a conditional probabilities based on given networks.
Nevertheless, these approaches suffer from some limitations particularly in addressing overlapping core-periphery pairs and in providing a comprehensive generative understanding of network with core-periphery structures.

Different from these approaches, in this work, we propose a generative model  \cite{jebara2012machine} 
that can approximately identify how a given network is generated with a few basic parameters and assumptions. The solution to this model can distinguish not only traditional  core-periphery pairs, but also overlapping ones, with relatively high accuracy. 
Our contributions can be summarized as follows:

\begin{itemize}
	\item %
	We propose a novel generative model of masked Bayesian non-negative matrix factorization, which is able to predict the likelihood of core scores and core-periphery pair affiliations.  %
	\item We theoretically prove that our approach can converge and demonstrate that it is applicable to overlapping core-periphery pairs. %
	\item %
	We verify the effectiveness of our method on synthetic networks and real-world networks, and demonstrate that it can be  accelerated easily with GPUs  to achieve a remarkable speedup. %
\end{itemize}

\begin{figure}[!t]
	\centering %
	\includegraphics[width=0.5\textwidth]{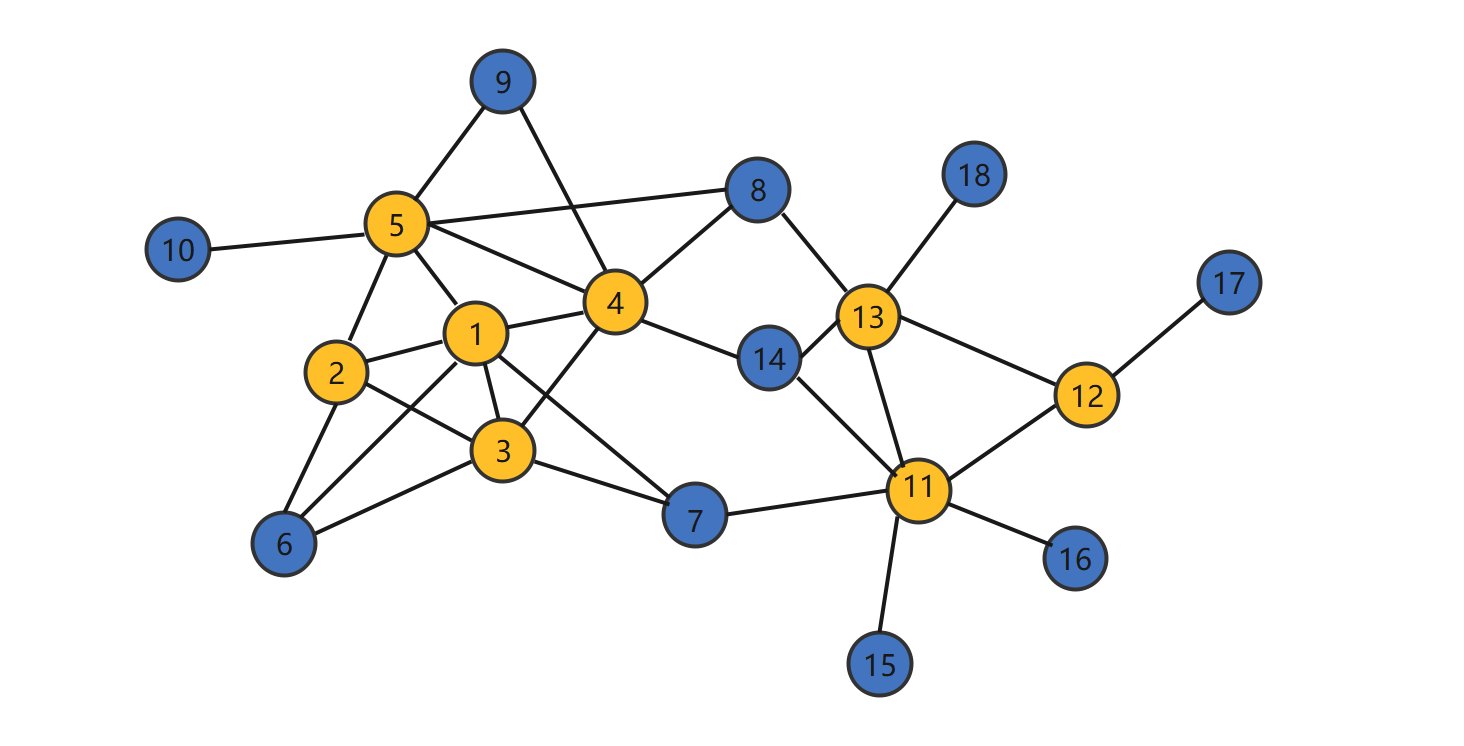} %
	\caption{A visual illustration of a simple network with two core-periphery structures. Yellow nodes are core nodes ,while blue nodes are periphery nodes.} %
	\label{CP-structure} %
\end{figure}%

\section{related work}

\subsection{Core-periphery detection}

Core-periphery detection has been investigated for decades, and two kinds of approaches have been developed. 
One kind of approach is based on binary classification. Kojaku et al. \cite{kojaku2017finding} propose a scalable algorithm to detect multiple nonoverlapping core-periphery pairs in a network, which extends the idea of core-periphery structure \cite{borgatti2000models}. It can also identify the number and size of core-periphery structures automatically. %
Zhang et al. \cite{zhang2015identification} propose a %
statistical inference method using expectation maximization and belief programs to fit a generative model to observed network data. %
Shen et al.\cite{shen2021finding} propose a metric to measure the performance of core-periphery detection algorithms, and propose a likelihood model to find the best solution in terms of that metric. %
Xiang et al.\cite{xiang2018unified} developed a unified framework for detecting core-peripheral structures and overlapping communities.
Ma et al.\cite{ma2018detection} proposed an parameter-free algorithm to detect the core-periphery structures based on the 3-tuple motif. 
Multi-class classification \cite{rezaei2020ml} or feature selection  \cite{karami2023unsupervised} methods may also help  for the analysis.

Another type of core-periphery detection approach is based on soft thresholds.
Yan et al. \cite{yan2019multicores} propose identifying multiple cores-periphery pairs through hierarchical clustering and using a difference score between empirical and random networks to choose the best partition. 
Boyd et al.\cite{minres} propose a method based on minimum residual singular value decomposition, which is suitable for diagonal missing or asymmetrical networks. 
Lee \cite{lee2014density} proposed a method based on density and transport and illustrated its usefulness in transportation networks.
Liu et al.\cite{9718598} proposed a hybrid method based on K-shell decomposition to identify the most influential spreaders in complex networks.
Shen et al.\cite{shen2021influences} proposed influence-based core-periphery detection approach to find multiple pairs of core-periphery nodes.

\subsection{Non-negative matrix factorization}
Non-negative matrix factorization (NMF) has numerous applications, including email surveillance\cite{berry2005email} and document clustering \cite{xu2003document}. The fundamental principle of NMF involves factorizing the given matrix into two non-negative matrices. 
Lee and Seung \cite{lee2000algorithms} proposed two different multiplicative algorithms for NMF, one algorithm aims to minimize the conventional least squares error, and the other algorithm aims to minimize the generalized Kullback-Leibler divergence. 
Gonzalez and Zhang \cite{gonzalez2005accelerating} further developed a variation of one of the Lee-Seung algorithms with a improved performance. 
Zdunek and Cichocki\cite{zdunek2006non} proposed a quasi-Newton method for NMF by considering the special structure of the Hessian of the Amari alpha divergence. 

In the context of community detection, the basic idea is factorizing the adjacency matrix of the observed network into two non-negative matrices, these matrices can be used to represent the importance of different nodes in different communities. In recent years, many NMF-based community detection approaches have been proposed.
Wang et al. \cite{wang2011community} proposed three NMF-based community detection techniques, namely Symmetric NMF, Asymmetric NMF, and Joint NMF, which can effectively detect community structures.
Shi et al. \cite{shi2015community} proposed the pairwisely constrained nonnegative symmetric matrix factorization (PCSNMF) method, which identifies community structures by considering both symmetric community structures of undirected network and pairwise constraints generated from some ground-truth group information. 
Psorakis et al. \cite{psorakis2011overlapping} proposed a community detection method based on Bayesian non-negative matrix factorization that can represent the overlapping between different communities and achieve soft community partitioning. 
Kamuhanda et al. \cite{9146784} proposed the Sparse Nonnegative Matrix Factorization (SNMF) for detecting multiple local communities.

However, none of these approaches focus on the detection of core-periphery structures. Our work further developed a core-periphery detection method based on masked Bayesian Non-negative Matrix Factorization. In the following section, we describe our approach in detail.

\begin{figure}[!t]
	\centering %
	\includegraphics[width=0.45\textwidth]{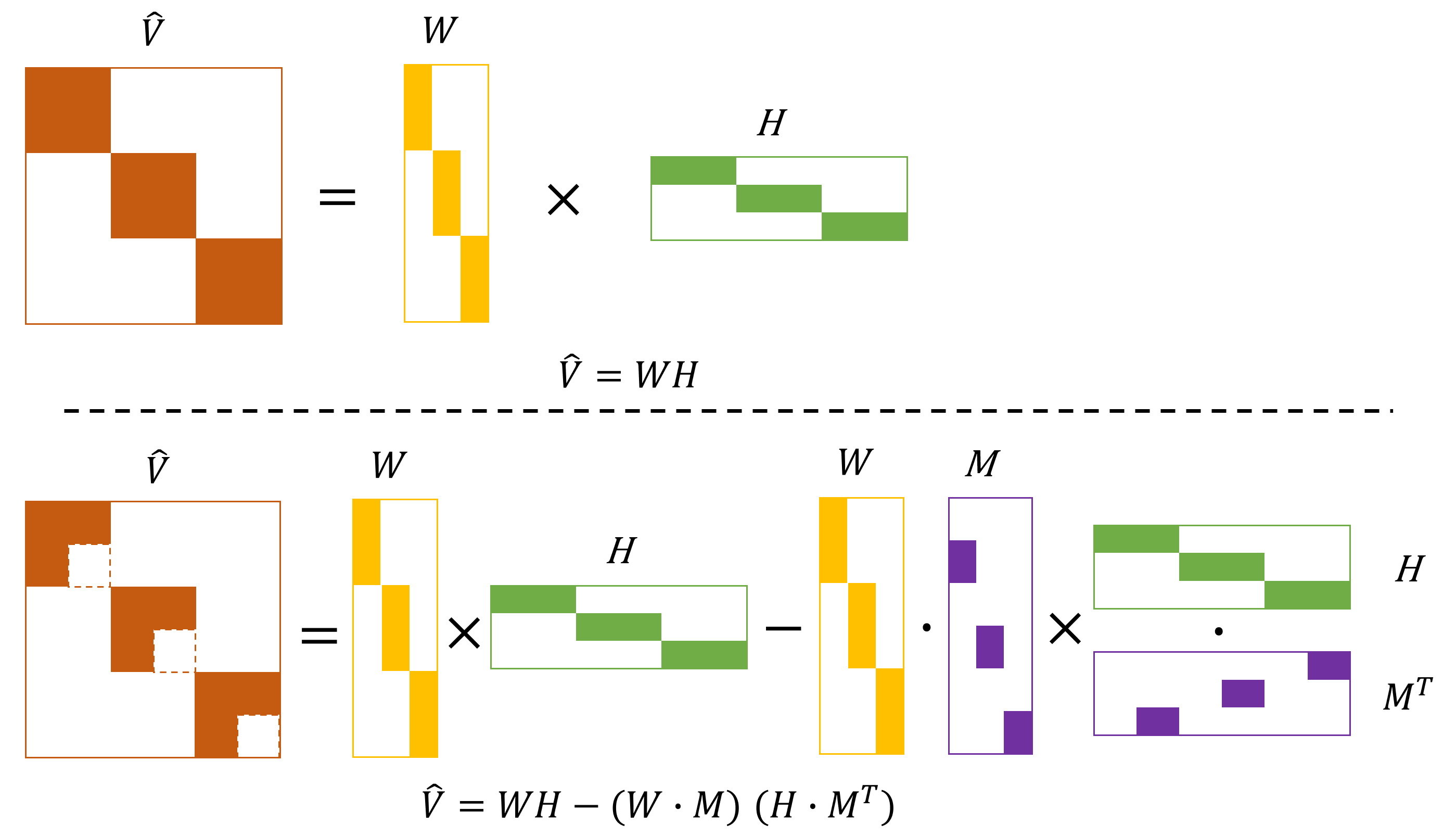} %
	\caption{Comparison between the traditional non-negative matrix factorization and the proposed factorization method. %
	$\hat{V}$ is an approximation to the adjacency matrix, %
	Factors $W$ and $H$ can indicate core-periphery pair affiliations. In our proposed method, a mask matrix $M$ is multiplied with $W$ and $H$ to highlight core nodes. %
    } %
	\label{CP-NMF} %
\end{figure}%

\section{method}
\subsection{Masked Bayesian non-negative matrix factorization}

 The basic idea of NMF is to decompose a matrix into two matrices, and entries of all these matrices are non-negative.  Formally, 
given a non-negative matrix $V \in \mathbb{R}^{N \times N}$, an NMF algorithm attempts to find two matrices $W \in \mathbb{R}^{N \times K}$ and $H \in \mathbb{R}^{K \times N}$ such that %

\begin{align}
	 V \approx \hat{V} = WH.
\end{align}

For core-periphery detection, we introduce a mask matrix  $M\in [0,1]^{N\times K}$ to highlight core nodes 
as follows: 

\begin{equation}
	V \approx \hat{V} = WH - (W\circ M) 
	 (H\circ M^T).\label{CP-NMF-EQ}
\end{equation}

Here, the notation '$\circ$' indicates element-wise matrix multiplication. In our analysis, matrix $V$ represents an adjacency matrix. The non-negative factor $W$ can be the same as $H^T$ when $V$ is symmetric, and $W_{ik}$ represents how likely a node belongs to the core-periphery pair $k$. A mask matrix $M$ can reduce the connection probability between periphery nodes by subtracting $(W \circ M)(H \circ M^T)$ from $WH$. 

As illustrated in Figure \ref{CP-NMF}, a traditional non-negative matrix factorization method cannot describe core-periphery structures, while our proposed approach works. 
 
After factorization, a larger $M$ indicates that the corresponding node is more likely to be a periphery node and vice versa. %
The matrix $M$ is functionally equivalent to a mask during the factorization. %

\begin{table}[!t]  \caption{Notations}
	\label{Notations}
	\centering
	\begin{tabular}{c|c}
		
		\toprule
		Notation  &  Explanation \\
		\midrule
		$N$  & Network size\\ 
		$K$  & Number of latent core-periphery pairs\\ 
		$V \in \mathbb{R}^{N \times N}$  & Adjacency matrix of observed network\\ 
		$\hat{V} \in \mathbb{R}^{N \times N}$  & Adjacency matrix of expectation network\\ 
		$W \in \mathbb{R}^{N \times K}$  & Non-negative interaction matrix\\ 
		$H \in \mathbb{R}^{K \times N}$  & Non-negative interaction matrix\\ 
		$M \in \mathbb{R}^{N \times K}$  & Non-negative mask matrix\\ 
		$\beta$  & Hyperparameter of the distribution of $W$ and $H$\\ 
		$\mu, \overline{\sigma}$  & Hyperparameters of the distribution of $M$\\
		$a,b$  & Hyperparameters of the distribution of $\beta$ \\ 
		$\hat{\mu}, \hat{\sigma}$  & Hyperparameters of the distribution of $\mu$\\
		\bottomrule  
	\end{tabular}
\end{table}

 \subsection{Likelihood model}
\label{likelihood-model}

\begin{figure}[!t]
	\centering %
	\includegraphics[width=0.45\textwidth]{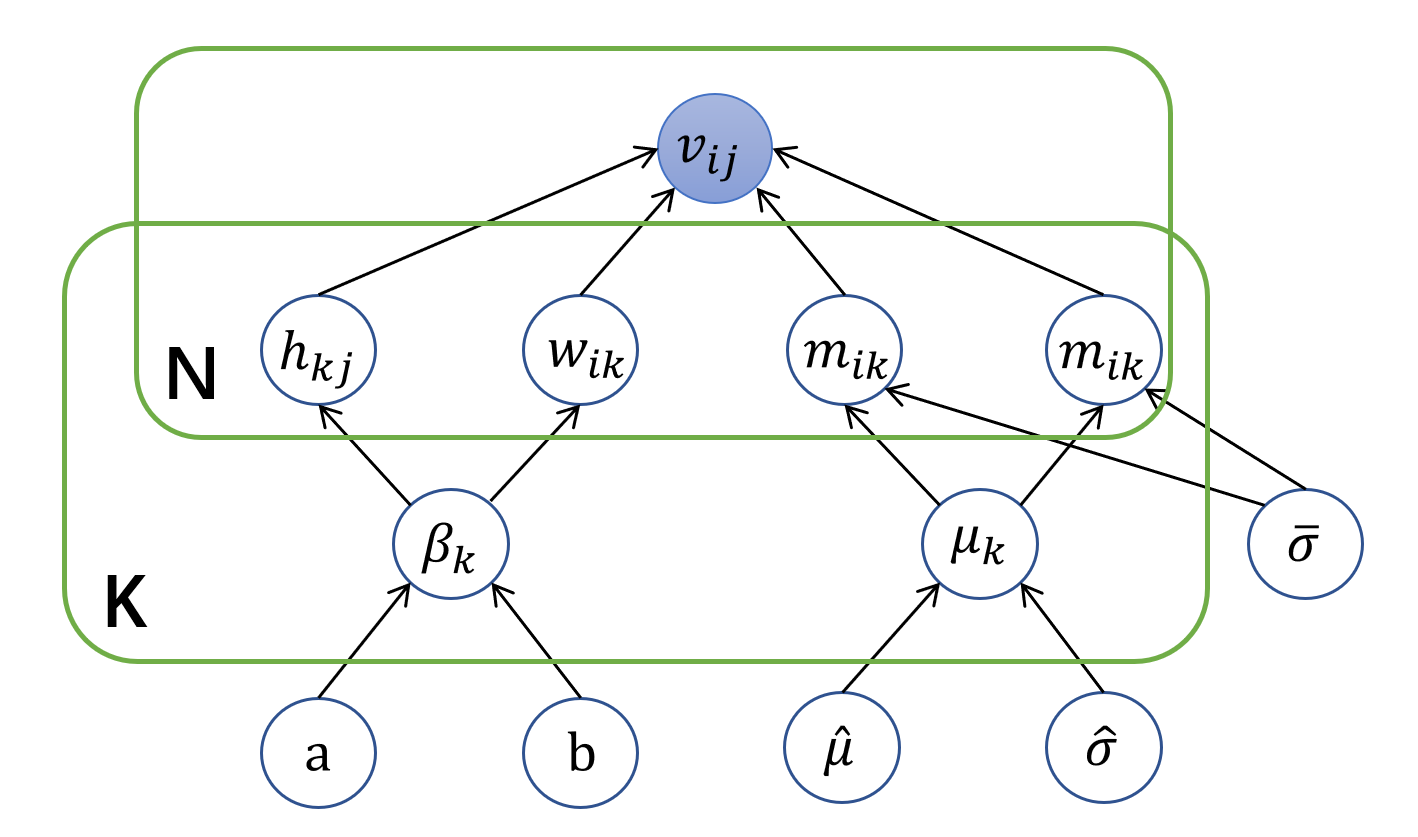} %
	\caption{A graphical illustration of our masked Bayesian non-negative matrix factorization model. %
		The observed value $V$ depends on %
		$W$, $H$, and $M$, and these variables further depend on  $\beta$ and  $\mu$.  $a$, $b$, $\overline{\sigma}$, $\hat{\mu}$, and $\hat{sigma}$ are hyper-parameters. %
	}
	\label{BayesianNMF} %
\end{figure}%

The overall architecture of our proposed Bayesian model is as shown in Figure \ref{BayesianNMF}. %

For ease of analysis, we assume that each entry of the adjacency matrix $V$ follows a Poisson distribution with parameter $\hat{V}$ as follows:
 \begin{align}
 	p(V|\hat{V}) = p(V|W,H,M),
 \end{align}
and the entry-wise representation is as follows:
\begin{align}
	p(v_{ij}|\hat{v}_{ij}) = e^{-\hat{v}_{ij}}\frac{\hat{v}_{ij}^{v_{ij}}}{\Gamma(v_{ij}+1)},
\end{align}
where $\hat{v}_{ij} = \sum_{k=1}^{K}(w_{ik}h_{kj} - w_{ik}m_{ik} h_{kj}m_{jk})$ according to Eq.   (\ref{CP-NMF-EQ}). %
The corresponding negative log-likelihood is:
 \begin{align}
 	-{\rm log}p(v_{ij}|\hat{v}_{ij}) =& -v_{ij}{\rm log}\hat{v}_{ij} + \hat{v}_{ij} + {\rm log}v_{ij}! \label{post-v-EQ}\\
 	=& v_{ij} {\rm log}(\frac{v_{ij}}{\hat{v}_{ij}}) + \hat{v}_{ij} + \kappa_1,\label{neglog-v}
 \end{align}
where the second equation is obtained by choosing an appropriate constant $\kappa$, given $V$ as a constant.

Elements of $W$ and $H$, namely, $(w_{ik}$ and $(h_{ik}$ are defined to follow a half-normal distribution 
\begin{align}
		p(w_{ik}|\beta_{k}) = \mathcal{HN}(w_{ik}|0,\beta_{k}^{-1}), \\
		p(h_{kj}|\beta_{k}) = \mathcal{HN}(h_{kj}|0,\beta_{k}^{-1}),
\end{align}
 where the probability density function of $\mathcal{HN}$  is as follows
 \begin{align}
 	\mathcal{HN}(x|0,\beta^{-1}) = \sqrt{\frac{2}{\pi}}\beta^{-\frac{1}{2}}exp(-\frac{1}{2}\beta x^2).
 \end{align}
 
 Thus the negative log likelihood of $W$ and $H$ is:
 \begin{align}
 		-{\rm log}p(\textbf{W}|\beta) &= \sum_{k=1}^{K}\sum_{i=1}^{N}\frac{1}{2}\beta_{k}w_{ik}^{2}-\frac{N}{2}{\rm log}\beta_{k},\\
 		-{\rm log}p(\textbf{H}|\beta) &= \sum_{k=1}^{K}\sum_{j=1}^{N}\frac{1}{2}\beta_{k}h_{kj}^{2}-\frac{N}{2}{\rm log}\beta_{k}\label{neglog-WH}.
 \end{align}
 
In addition, we consider $\beta_{k}$ as a value randomly drawn from a Gamma distribution with parameters $a$ and $b$
 \begin{align}
 	p(\beta_{k}|a_{k},b_{k}) = \frac{b_{k}^{a_{k}}}{\Gamma(a_{k})}\beta_{k}^{a_{k}-1}exp(-\beta_{k}b_{k}), 
 \end{align}
 and thus, the negative log-likelihood is:
 \begin{align}
 	-{\rm log}p(\beta_{k}|a,b)=\sum_{k=1}^{K}(\beta_{k}b-(a-1){\rm log\beta_{k}})+\kappa_2\label{neglog-beta},
 \end{align}
where $\kappa_2$ can be  another constant with an appropriate value. %

The mask matrix $M$ is to filter out connections between periphery nodes. %
As each entry of $M$ is non-negative and is less than one, we %
consider $m_{ik}$ follows a truncated normal distribution \cite{burkardt2014truncated}, with parameter $\mu$  and $\sigma^2$ as the mean and the variance respectively of the original normal distribution before truncation, and $[0,1]$ as the sample space after truncation
\begin{align}
	\label{likelihood-m}
	p(m_{ik}|\mu_{k},\overline{\sigma},0,1) &= \mathcal{TN}(\mu_{k},\overline{\sigma},0,1). 
\end{align}

 Formally, $\mathcal{TN}$ is defined as follows
 \begin{align}
	\label{truncated-normal-distribution}
	\mathcal{TN}(x|\mu,\sigma,0,1) &= \frac{1}{\sigma} \cdot \frac{\phi(\frac{x-\mu}{\sigma})}{\Phi(\frac{1-\mu}{\sigma})-\Phi(\frac{-\mu}{\sigma})},
 \end{align}
where 
\begin{align}
	\label{density}
	\phi(\delta) &= \frac{1}{\sqrt{2\pi}}exp(-\frac{1}{2}\delta^2), \\ 
	\label{cumulative}
	\Phi(\delta) &= \frac{1}{2}(1 + erf(\frac{\delta}{\sqrt{2}})),\\ 
	\label{erf-1}
	erf(\delta) &= \frac{2}{\sqrt{\pi}}\int_{0}^{\delta} e^{-t^2} dt \\ 
	\label{erf-2}
	&= \frac{2}{\sqrt{\pi}}\sum_{0}^{\infty}\frac{(-1)^{n}\delta^{2n+1}}{n!(2n+1)}  \\ 
	\label{erf-3}
	&\approx \frac{2}{\sqrt{\pi}} \sigma,
\end{align}
where Eq.  (\ref{erf-2}) is obtained by Taylor expansion, and the approximation in Eq.  (\ref{erf-3}) is obtained by taking the first term of the expansion.

Thus, by summarizing Eq.  (\ref{likelihood-m}), Eq.  (\ref{truncated-normal-distribution}), Eq.  (\ref{density}), Eq.  (\ref{cumulative}), Eq.  (\ref{erf-1}), Eq.  (\ref{erf-2}), and Eq.  (\ref{erf-3}) we can have 
 \begin{align}
 	\nonumber
	p(m_{ik}|\mu_{k},\overline{\sigma},0,1) %
	\nonumber\\
	&= \frac{1}{\overline{\sigma}} \cdot \frac{\phi(\frac{m_{ik}-\mu_{k}}{\overline{\sigma}})}{\Phi(\frac{1-\mu_{k}}{\overline{\sigma}})-\Phi(\frac{-\mu_{k}}{\overline{\sigma}})} \\
	\nonumber
	&\approx \frac{1}{\overline{\sigma}}\cdot \frac{\frac{1}{\sqrt{2\pi}\overline{\sigma}}exp(-\frac{1}{2}(\frac{m_{ik}-\mu_{k}}{\overline{\sigma}})^2)}
	{\frac{1}{\sqrt\pi}(\frac{1-\mu_{k}}{ \sqrt2 \overline{\sigma}})-\frac{1}{\sqrt\pi}(\frac{-\mu_{k}}{\sqrt2 \overline{\sigma}})} \\
	&=\frac{1}{\overline{\sigma}}e^{\frac{-(m_{ik}-\mu_{k})^2}{2\overline{\sigma}^2}},
\end{align}
and  the negative log likelihood of $M$ is:
\begin{align}
	{\rm -log}p(M|\mu,\overline{\sigma}) = \sum_{k=1}^{K}\sum_{i=1}^{N}\frac{(m_{ik}-\mu_k)^2}{2\overline{\sigma}^2}+\overline{\sigma}\label{neglog-m},
\end{align}
where $\mu$ is a vector containing $\mu_k$ over different choices of  $k$s,  $\overline{\sigma}$ is a hyperparameter. %

In addition, we also consider $\mu_k$ follows a normal distribution with predefined hyper-parameters: mean $\hat{\mu}$ and variance $\hat{\delta}^2$ %
\begin{align}
	&p(\mu_k|\hat{\mu},\hat{\delta})=\mathcal{N}(\hat{\mu},\hat{\delta}^2),
\end{align}
and the negative log-likelihood is:
\begin{align}
	{\rm -log}P(\mu_k|\hat{\mu},\hat{\sigma}) = \frac{(\mu_{k}-\hat{\mu})^2}{2\hat{\sigma}^2} + \kappa_3. \label{neglog-mu}
\end{align}

By Bayesian theorem, to find the best choice of $W$, $H$, $\beta$, $M$,  and $\mu$, we can optimize the posterior distribution as follows
\begin{align}
	\nonumber
	&p(W,H,M,\beta,\mu | V) \\ 
	=&\frac{p(V|W,H,M)p(W|\beta)p(H|\beta)p(M|\mu)p(\beta)p(\mu)}{p(V)},
\end{align}
which is equivalent to minimize the following negative log likelihood %
\begin{align}
	\label{lossfunc}
	\nonumber
	\mathcal{U} =& -{\rm log}P(V|W,H,M)  -{\rm log}P(W|\beta) -{\rm log}P(H| \beta)\\
	& - {\rm log}P(M| \mu) - {\rm log}P(\mu) - {\rm log}P(\beta)\\ %
	=& \sum_{i=1}^{N}\sum_{j=1}^{N}(v_{ij} \cdot {\rm log}\frac{v_{ij}}{\hat{v}_{ij}} +\hat{v}_{ij}) \nonumber \\
	\nonumber
	&+\sum_{k=1}^{K}\sum_{i=1}^{N}(\frac{1}{2}\beta_{k}w_{ik}^{2})-\frac{N}{2}{\rm log}\beta_{k}\\
	\nonumber
	&+\sum_{k=1}^{K}\sum_{j=1}^{N}(\frac{1}{2}\beta_{k}h_{kj}^{2})-\frac{N}{2}{\rm log}\beta_{k} \\
	\nonumber
	&+\sum_{k=1}^{K}(\beta_{k}b-(a-1){\rm log\beta_{k}}) \\
	\nonumber
	&+\sum_{k=1}^{K}\sum_{i=1}^{N} \frac{(m_{ik}-\mu_k)^2}{2\overline{\sigma}^2}\\
	&+\sum_{k=1}^{K}\frac{(\mu_{k}-\hat{\mu})^2}{2\hat{\sigma}^2}
	+\kappa,
\end{align}
where the last equation is obtained by substituting corresponding terms with Eq.   (\ref{neglog-v}), Eq.  (\ref{neglog-WH}), Eq.  (\ref{neglog-beta}), Eq.   (\ref{neglog-m}), and Eq.   (\ref{neglog-mu}).

$K$ is usually chosen to be large enough. When converged, most columns of $W$ and rows of $H$ are likely to be zeros due to their prior settings, and the remaining can indicate  identified core-periphery pairs. %

Thus, $W_{ik}$ or $H_{ki}$ can indicate how likely node $i$ belong to core-periphery pair $k$, and $1-M_{ik}$ can indicate the core score of node $i$ in pair $k$. %
The continuous output provides a quantitative metric to measure the importance of each node in different pairs, with a "soft" core-periphery structure identification, which can be useful in identifying  overlapping pairs.

To compare with many traditional approaches, we can discretize the output of our approach. For non-overlapping pair detection, we can choose $\arg\max_k W_{ik}$ as the pair affiliation of node $i$. 
Similarly, for identify core nodes explicitly in a given pair $k$, we can choose node $i$ as a core node if $M_{ik}$ is less than the average value of $M_{:k}$.  %

\subsection{Optimization method}

In this part, we propose a multiplicative approach for optimization. 
The goal of optimization is to find appropriate $W$, $H$, $M$, $\beta$, and $\mu$ so that the objective function, Eq.  ({\ref{lossfunc}}), can be minimized. Inspired by the approach proposed by Lee and Seung  \cite{lee2000algorithms}, in this work, we propose a multiplicative approach for masked Bayesian non-negative matrix factorization. 
The update rule can be deducted as follows. 

First, we consider the gradient descent approach %
\begin{align}
	W^{*} &= W + \eta_{W}(\nabla_{W}\mathcal{U})\label{update-W}\\
	H^{*} &= H + \eta_{H}(\nabla_{H}\mathcal{U})\label{update-H},\\
	M^{*} &= M + \eta_{M}(\nabla_{M}\mathcal{U})\label{update-M}.
\end{align}

Taking $W$ as an example, the gradient is 

\begin{align}
	\nabla_{W}\mathcal{U}=
	& -\frac{V}{\hat{V}} H^T + (\frac{V}{\hat{V}} (H \circ M^T)^T) \circ M \nonumber \\ 
	& + (\textbf{1}  H^T - (\textbf{1} (H \circ M^T))\circ M) \nonumber \\ 
	& + W  B,\label{wGrad}
\end{align}
where $\hat{V} = W H - (W \circ M)(H \circ M^T)$, $B = \beta I$, $I$ indicates identity matrix, $\textbf{1} \in \mathbb{R}^{N \times N}$ is an all-one matrix, and '$\circ$' indicates element-wise matrix multiplication.
By purposely choosing an appropriate step length as follows

\begin{align}
	\eta_{W} = - \frac{W}{(\textbf{1}  H^T - (\textbf{1} (H \circ M^T))\circ M) + W  B},
\end{align}
the negative terms in Eq.  (\ref{wGrad}) can be eliminated, and the updated $W$ can be non-negative when approaching the optimum. 

Similarly, the gradient and the step length of $H$ and $M$ are as follows:
\begin{align}
	\nonumber
	&\nabla_{H}\mathcal{U} \\
	\nonumber
	=& -W^T  \frac{V}{\hat{V}} + M^T \circ ((W \circ M)^T  \frac{V}{\hat{V}})\\
	& +(W^T  \textbf{1} - ((WM)^T  \textbf{1}) \circ M^T) + B  H,\\
	\nonumber
	&\nabla_{M}\mathcal{U} \\
	\nonumber
	=& H^T \circ (\frac{V}{\hat{V}} W \circ M) + W \circ (\frac{V}{\hat{V}} H^T \circ M)\\
	\nonumber
	& - H^T \circ (\textbf{1}  W \circ M) - W \circ (\textbf{1}  H^T \circ M) \\
	& + \frac{(M-\mu)_{-}}{\overline{\sigma}^2}+ \frac{(M-\mu)_{+}}{\overline{\sigma}^2},\\
	\nonumber
	&\eta_{H}= \\
	 -& \frac{H}{(W^T \textbf{1} - ((W \circ M)^T \textbf{1}) \circ M^T) + B  H},\\
	\nonumber
	&\eta_{M}= \\ 
	 -& \frac{M}{(\frac{V}{\hat{V}}(H \circ M^T)^T)\circ W
		+ (\frac{V}{\hat{V}}(W \circ M))\circ H^T 
		+ \frac{(M-\mu)_{+}}{\overline{\sigma}^2} },
\end{align}
where the operations $(\cdot)_{+}$ and $(\cdot)_{-}$ denote to keep the positive and the negative entries of the matrix, respectively, by zeroing out other entries. 
With gradients and %Noting that we have obtained all the gradients and 
step lengths, Eq. (\ref{update-W}), (\ref{update-H}), and (\ref{update-M}) can be expressed as follows: %in the following forms:
\begin{align}
&	W^{*} =\nonumber\\& \frac{W \circ (\frac{V}{\hat{V}}  H^T - (\frac{V}{\hat{V}}  (H\circ M^T)^T) \circ M )}{(\textbf{1}  H^T - (\textbf{1} (H \circ M^T)) \circ M) + W  B},\\
&	H^{*} =\nonumber\\& \frac{H \circ (W^T  \frac{V}{\hat{V}} + M^T \circ ((W \circ M)^T  \frac{V}{\hat{V}}))}{(W^T \textbf{1} - ((W \circ M)^T \textbf{1}) \circ M^T) + B  H},\\
&	M^{*} =\nonumber\\& \frac{M \circ(H^T \circ (\textbf{1}  (W \circ M)) + W \circ (\textbf{1} (H^T \circ M)) - \frac{(M-\mu)_{-}}{\overline{\sigma}^2})}{(\frac{V}{\hat{V}}(H \circ M^T)^T)\circ W
		+ (\frac{V}{\hat{V}}(W \circ M))\circ H^T 
		+ \frac{(M-\mu)_{+}}{\overline{\sigma}^2} }.
\end{align}

During the iteration process, $W$, $H$, $M$, $\beta$, and $\mu$ are initialized with non-negative values, and $M$ is always projected into the interval of [0, 1] after each update, 
so that,  $W^*$, $H^*$ and $M^*$ are always non-negative.

The local optimum of
$\beta$ and $\mu$ can be directly solved by setting the gradient of  Eq. (\ref{lossfunc}) to  be zero, with respect to each variable. 

\begin{align}
	\nabla_{\beta_{k}^*}\mathcal{U} =& \frac{1}{2}(\sum_{i=1}^{N}w_{ik}^2 + \sum_{j=1}^{N}h_{jk}^2)-\frac{N+a-1}{B^*}+b = 0,\\
	\nabla_{\mu^*}\mathcal{U} =& \frac{\sum_{i=1}^{N}(\mu_k^*-m_{ik})}{\overline{\sigma}^2} + \frac{(\mu_{k}^*-\hat{\mu})}{\hat{\sigma}^2} = 0.
\end{align}

Thus, in each iteration, it can be  computed by solving the above equations, and the update rules for $\beta$ and $\mu$ are:
\begin{align}
	\label{update-mu}
	\mu_{k}^* &= \frac{\hat{\sigma}^2}{N\hat{\sigma}^2 + \overline{\sigma}^2}\sum_{i=1}^N m_{ik} 
	+ \frac{\overline{\sigma}^2}{N\hat{\sigma}^2 + \overline{\sigma}^2}\hat{\mu}, \\
	\label{update-b}
	\beta_{k}^* &= \frac{N + a - 1}{\frac{1}{2}(\sum_{i=1}^{N}w_{ik}^2 + \sum_{j=1}^{N}h_{jk}^2) + b}.
\end{align}

%Note that the Eq. (\ref{update-mu}) and  Eq. (\ref{update-b}) are always non-negative, which is the reason why they are suitable for fixed point iterations.
The overall optimization approach is as shown in Algorithm.\ref{algo}. %

\begin{algorithm}[!t]
	\caption{Core-Periphery Detection}
	\begin{algorithmic}[1]\label{algo}
		\ENSURE The observed adjacency matrix $V \in \mathbb{R}^{N \times N}_{+}$ 
		\ENSURE Hyperparameters $a, b, \overline{\sigma}, \hat{\sigma}, \hat{\mu}$
		\ENSURE Initial numbers of core-periphery-pairs $K$ 
		\ENSURE Maximum iteration $n_{iter}$
		\REQUIRE Non-negative matrices $W,H,M$
		\REQUIRE Non-negative vectors $\beta,\mu$
		\STATE Initialize $W,H,M$ with random non-negative values
		\STATE Initialize $\beta, \mu$ with an all-one vector with dimension $K$
		\FOR{$i = i:n_{iter}$} 
		\STATE Update $W$ according to Eq.  (\ref{update-W})
		\STATE Update $H$ according to Eq.  (\ref{update-H})
		\STATE Update $M$ according to Eq.  (\ref{update-M})
		\STATE Update $\mu$ according to Eq.  (\ref{update-mu})
		\STATE Update $\beta$ according to Eq.  (\ref{update-b})
		\ENDFOR
		\RETURN $W,H,M,\beta,\mu$
	\end{algorithmic} 
\end{algorithm}

\subsection{Convergence Analysis}
\label{convergence_analysis}
In this part, we introduce the proof of the convergence of  Algorithm.\ref{algo}. Following the idea of the traditional procedure \cite{lee2000algorithms}, we can prove the convergence of $W$, $H$, and the mask matrix $M$.

\begin{definition}
	$G(h, h')$ is an auxiliary function for $F(h)$ if the following conditions can be satisfied:
	\begin{align}
		G(h, h') &\geq F(h), \\
		 G(h,h) &= F(h).
	\end{align}
\end{definition}

According to \cite{lee2000algorithms}, the auxiliary function can be useful under the following lemma:
\begin{lemma}
	If $G$ is an auxiliary junction, then $F$ is nonincreasing under the update:
	\begin{align}
		\label{convergence}
		h^{(t+1)} = \mathop{\arg\min}_{h}  G(h, h^{(t)})
	\end{align}
\end{lemma}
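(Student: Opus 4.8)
The plan is to establish monotonicity through the standard short chain of inequalities that underpins every auxiliary-function argument. First I would unpack the update rule: since $h^{(t+1)} = \arg\min_h G(h, h^{(t)})$, minimality gives $G(h^{(t+1)}, h^{(t)}) \le G(h, h^{(t)})$ for every admissible $h$, and in particular, choosing $h = h^{(t)}$, we get $G(h^{(t+1)}, h^{(t)}) \le G(h^{(t)}, h^{(t)})$.

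Next I would bring in the two defining properties of an auxiliary function. Applying $G(h,h') \ge F(h)$ at $h = h^{(t+1)}$, $h' = h^{(t)}$ yields $F(h^{(t+1)}) \le G(h^{(t+1)}, h^{(t)})$, and applying $G(h,h) = F(h)$ at $h = h^{(t)}$ yields $G(h^{(t)}, h^{(t)}) = F(h^{(t)})$. Concatenating the three facts,
\[
F(h^{(t+1)}) \le G(h^{(t+1)}, h^{(t)}) \le G(h^{(t)}, h^{(t)}) = F(h^{(t)}),
\]
so $F(h^{(t+1)}) \le F(h^{(t)})$, i.e., $F$ is nonincreasing along the sequence of iterates produced by Eq.~(\ref{convergence}).

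Since each link in the chain is an immediate consequence of a single hypothesis, there is no genuine obstacle here; the only point worth a remark is that the update must be well defined, i.e., the minimizer of $G(\cdot, h^{(t)})$ should exist, and if it is not unique, any minimizer may be selected without affecting the argument. In the constructions that follow for $W$, $H$, and $M$, the auxiliary function $G$ is built as a separable quadratic-type upper bound with strictly positive curvature, so its minimizer exists in closed form and well-definedness is automatic; moreover, equality $F(h^{(t+1)}) = F(h^{(t)})$ would force $h^{(t)}$ to be the minimizer of $G(\cdot, h^{(t)})$ and hence a stationary point of $F$, which is exactly the hook needed for the subsequent convergence analysis of the multiplicative updates.
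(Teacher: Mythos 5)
Your proof is correct and is essentially the same as the paper's: both establish the chain $F(h^{(t+1)}) \le G(h^{(t+1)}, h^{(t)}) \le G(h^{(t)}, h^{(t)}) = F(h^{(t)})$, with the first inequality from $G \ge F$, the second from minimality of $h^{(t+1)}$, and the equality from $G(h,h) = F(h)$. Your version merely spells out the justification for each link (and adds a useful remark on well-definedness of the minimizer) where the paper states the chain bare.
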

\begin{proof}
	\begin{align}
		F(h^{(t+1)}) \le G(h^{(t+1)}, h^{(t)}) \le G(h^{(t)}, t^{(t)}) = F(h^{(t)})
	\end{align}
\end{proof}

By defining the appropriate auxiliary functions, the update rules in Algorithm.\ref{algo} follows from Eq. (\ref{convergence}). 

Since the hyperparameters $a,b,\hat{\mu},\hat{\sigma},\overline{\sigma}\nonumber$ in loss function in Eq.  (\ref{lossfunc}) are fixed, the loss function with respect to $W$ can be simplified by:
\begin{align}
	\nonumber
	\mathcal{U} 
	=&\sum_{i=1}^{N}\sum_{j=1}^{N}(v_{ij} \cdot {\rm log}\frac{v_{ij}}{\hat{v}_{ij}}+\hat{v}_{ij})\\
	&+\sum_{k=1}^{K}\sum_{i=1}^{N}(\frac{1}{2}\beta_{k}w_{ik}^{2}).
\end{align}

\begin{theorem}
	By updating $W,H$ and $M$ under the rules presented Eq.  (\ref{update-W}), Eq.  (\ref{update-H}), and Eq.  (\ref{update-M}), the objective function Eq.  (\ref{lossfunc}) is non-increasing and will converge into a locally optimal solution.
\end{theorem}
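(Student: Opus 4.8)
The plan is to follow the auxiliary-function (majorize--minimize) technique of Lee and Seung recalled in Lemma~1, applied one variable block at a time to the objective $\mathcal{U}$ of Eq.~(\ref{lossfunc}). The updates of $\beta$ and $\mu$ in Eqs.~(\ref{update-b}) and (\ref{update-mu}) are exact coordinatewise minimizers of $\mathcal{U}$, so they can only decrease it; the work is therefore concentrated on the three multiplicative updates for $W$, $H$, and $M$. For each of these I would (i) fix the other variables, (ii) exhibit an auxiliary function $G$ with $G(\cdot,\cdot)\ge\mathcal{U}$ and $G(X,X)=\mathcal{U}(X)$ whose non-negative minimizer coincides with the stated multiplicative update, (iii) invoke Lemma~1 to conclude that the update does not increase $\mathcal{U}$, and then (iv) chain the five partial updates so that one full iteration of Algorithm~\ref{algo} is non-increasing in $\mathcal{U}$.

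For the $W$-block, fix $H,M,\beta,\mu$ and note that $\hat{v}_{ij}=\sum_k w_{ik}h_{kj}(1-m_{ik}m_{jk})$ is a non-negative linear form in the entries of $W$, because $m_{ik},m_{jk}\in[0,1]$ forces $1-m_{ik}m_{jk}\ge 0$; consequently $\sum_{ij}\hat{v}_{ij}$ is linear in $W$ and $-\sum_{ij}v_{ij}\log\hat{v}_{ij}$ is convex. Introducing the convex weights $\alpha_{ijk}=w_{ik}^{(t)}h_{kj}(1-m_{ik}m_{jk})/\hat{v}_{ij}^{(t)}$ (which sum to one over $k$) and applying Jensen's inequality to the concave function $\log$ gives the standard separable upper bound for the log-term; keeping the linear term and the prior term $\tfrac12\sum_{ik}\beta_k w_{ik}^2$ produces a candidate $G_W(W,W^{(t)})$ that is convex and separable across the $w_{ik}$. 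Setting $\partial G_W/\partial w_{ik}=0$ should then reproduce Eq.~(\ref{update-W}); the treatment of $H$ is identical by transposition, and I would verify the tightness condition $G_W(W,W)=\mathcal{U}(W)$ directly.

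The $M$-block is where I expect the real difficulty, and I would treat it as the main obstacle on two fronts. First, $\hat{v}_{ij}$ depends on $M$ through the bilinear products $m_{ik}m_{jk}$, so neither $-\log\hat{v}_{ij}$ nor $\sum_{ij}\hat{v}_{ij}$ is convex in $M$; the fix is to majorize the $M$-concave contributions (precisely those entering the gradient with a negative sign, together with the penalty piece corresponding to $(M-\mu)_{-}$) by freezing one of the two mask factors at $M^{(t)}$ and using a logarithmic/quadratic-over-linear surrogate, after which $G_M$ is again separable with minimizer equal to Eq.~(\ref{update-M}); the split into $(M-\mu)_{+}$ and $(M-\mu)_{-}$ in $\eta_M$ is exactly what keeps this bookkeeping consistent. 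Second --- and this also affects the $W$ and $H$ blocks --- the diagonal quadratic priors $\tfrac12\beta_k w_{ik}^2$, $\tfrac12\beta_k h_{kj}^2$, and $(m_{ik}-\mu_k)^2/(2\overline{\sigma}^2)$ are kept \emph{exactly} in the natural auxiliary function, so the multiplicative update is in general \emph{not} its minimizer (that minimizer solves a quadratic and can be overshot by the multiplicative expression), and monotone descent is therefore not automatic. Closing this gap is the crux: I would either design a looser auxiliary whose minimizer genuinely equals the multiplicative rule, or argue descent directly by viewing the update as a coordinatewise gradient step with step length $w_{ik}/(\text{positive part of the gradient})$ and showing --- possibly under a mild bound on the prior strength relative to the data term --- that this step does not overshoot the convex surrogate's minimizer, so that $\mathcal{U}$ still decreases.

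For the final assertion that the algorithm converges to a locally optimal solution, I would first observe that $\mathcal{U}$ is bounded below: each $v_{ij}\log(v_{ij}/\hat{v}_{ij})+\hat{v}_{ij}$ is a generalized Kullback--Leibler discrepancy plus $v_{ij}$ and hence is $\ge v_{ij}\ge 0$, the squared-error terms in $M$ and $\mu$ are non-negative, and the $\beta_k$-dependent part $\beta_k b-(N+a-1)\log\beta_k$ is bounded below on $\beta_k>0$. A monotone bounded sequence $\{\mathcal{U}^{(t)}\}$ therefore converges; combining this with the fact that the iterates can be confined to a compact set (using coercivity of $\mathcal{U}$ in the active variables together with the projection of $M$ onto $[0,1]$) gives existence of accumulation points, and a routine continuity/fixed-point argument identifies every accumulation point as a fixed point of the update maps, i.e. a point satisfying the KKT stationarity conditions of the constrained problem. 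I would state the conclusion at this level --- every limit point is a stationary (locally optimal) point --- and note that upgrading it to convergence of the entire iterate sequence would require an additional ingredient such as strict convexity along blocks or a more refined convergence argument.
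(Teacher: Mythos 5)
Your overall route is the same as the paper's: block-wise auxiliary functions in the sense of Lemma~1, applied to $W$, $H$, and $M$ in turn, with $\beta$ and $\mu$ handled as exact coordinate minimizers. Your $W$-block construction --- writing $\hat v_{ij}=\sum_k w_{ik}h_{kj}(1-m_{ik}m_{jk})$ with $1-m_{ik}m_{jk}\ge 0$ and applying Jensen's inequality with the convex weights $\alpha_{ijk}$ --- is the standard rigorous version of what the paper only gestures at: the paper's stated auxiliary function in Eq.~(\ref{auxiliary-func-w}) is not separable, is never shown to satisfy $G(W,W)=\mathcal{U}(W)$, and the key inequality $G\ge\mathcal{U}$ is simply asserted (``Given that \dots'') rather than proved. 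Where your plan is complete, it is strictly more careful than the published argument, and your closing discussion of boundedness below, compactness, and identification of limit points with KKT points supplies the ``converges to a locally optimal solution'' part of the claim, which the paper does not address at all beyond deriving the update formulas.

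The difficulty you single out as the crux is a genuine obstruction, and it is worth saying plainly that the paper does not resolve it either. Because the priors $\tfrac12\beta_k w_{ik}^2$ and $(m_{ik}-\mu_k)^2/(2\overline{\sigma}^2)$ are exactly quadratic, the stationarity condition of the natural (Jensen-majorized) auxiliary function is $\beta_k w_{ik}^2 + d\,w_{ik} - c = 0$ with $c,d\ge 0$, whose positive root is not the multiplicative expression in Eq.~(\ref{update-W}); and any surrogate of $\tfrac{\beta_k}{2}w_{ik}^2$ that is linear in $w_{ik}$, tight at $w^{(t)}_{ik}$, and whose minimizer reproduces the stated update necessarily lies \emph{below} the quadratic, i.e.\ it is a minorizer rather than a majorizer. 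So neither your plan as written nor the paper's proof actually establishes monotone descent of $\mathcal{U}$ under the updates (\ref{update-W})--(\ref{update-M}); what both arguments do support is the weaker statement that fixed points of these updates are stationary points of $\mathcal{U}$. In short: same method, your execution is more honest about where the argument is incomplete, and the gap you flag is real and remains open in the paper's own proof.
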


\begin{proof}
	Without loss of generality, we first demonstrate the convergence of $W$, and $H$ can be proved similarly as it is symmetric to $W$. The objective function with respect to $W$ can be written as follows, by omitting terms without $M$
	\begin{align}
		\label{objective-func-w}
		\mathcal{U} \left( W \right)
		 =&V\cdot \log \frac{V}{{WH-\left( W \circ M \right) \left( H \circ M^{T} \right)}}\nonumber\\
		&+WH-\left( W \circ M \right) \left( H \circ M^{T} \right)+\sum{\frac{1}{2}W^2B},
	\end{align}
	where the superscript of $W^2$ indicates an element-wise square operation. Here is the auxiliary function for $\mathcal{U}(W)$:
	\begin{align}
		\label{auxiliary-func-w}
		&G\left(W, W^{\left( t \right)} \nonumber\right) \\
		=&V\log \frac{V}{\hat{V}}+\hat{V}+\frac{1}{2}W^2B\nonumber\\
		&+W^{\left( t \right)}\left( \hat{V}+\frac{1}{2}W^{2}B \right) -W\log \frac{V}{\hat{V}}.
	\end{align}
	
	By subtracting Eq.  (\ref{objective-func-w}) from Eq.  (\ref{auxiliary-func-w}), we can have
	\begin{align}
		G\left(W, W^{\left( t \right)} \right) -\mathcal{U} \left( W \right) 
		&=W^{\left( t \right)}\left( V+\frac{1}{2}W^{2}B \right) -W\log \frac{\hat{V}}{V}.
	\end{align}
	
	Given that
	\begin{align}
		W^{\left( t \right)}\left( V+\frac{1}{2}W^{2}B \right) >W\log \frac{V}{\hat{V}}.
	\end{align}
	
	Thus we can infer that
	\begin{align}
		G\left( W, W^{\left( t \right)} \right) >\mathcal{U} \left( W \right).
	\end{align}
	
	The gradient of the auxiliary function with respect of $W_{ij}$ is
	\begin{align}
		\nonumber
		\frac{\partial G( W,W^{(t)} )}{\partial W_{ij}}
		=& \left[ W^{(t)}( (\textbf{1} H^{T} - (\textbf{1} (H \circ M^{T})^T) \circ M)+ WB ) \right.\\ 
		& \left. -W \circ ( \frac{V}{\hat{V}} H^{T} - ( \frac{V}{\hat{V}}  (H \circ M^{T})^T) \circ M)) \right]_{ij}.
	\end{align}
	
	When the above gradient is 0, we can get the iterative formula of $W$ as follows:
	\begin{align}
		W^{\left( t \right)}
		&=\left[ \frac{W \circ \left(\frac{V}{\hat{V}} H^{T} - (\frac{V}{\hat{V}} (H \circ M^{T})^T) \circ M \right)}
		{\left( \textbf{1} H^{T} - (\textbf{1} (H \circ M^{T})^T) \circ M \right) + WB} \right]_{ij}.
	\end{align}

	Next, we demonstrate  the convergence of $M$. By omitting irrelevant terms, the corresponding objective function with respect to $M$ is 
	\begin{align}
		\label{objective-func-m}
		&\mathcal{U} \left( M \right)\nonumber\\ 
		=&V\circ \log \frac{V}{\sum{WH-\left( W \circ M \right) \left( H \circ M^{T} \right)}}\nonumber\\
		&+\sum_{k=1}^K{\sum_{i=1}^N{\frac{\left( m_{ik}-\mu _k \right) ^2}{2\bar{\sigma}^2}}}.
	\end{align}
	
	The auxiliary function for $\mathcal{U}(m)$ is
	\begin{align}
		\label{auxiliary-func-m}
		\nonumber
		&G\left( M, M^{\left( t \right)} \right) \\ 
		\nonumber
		=&V \circ \log \frac{V}{\hat{V}}+\frac{\left( M-\mu \right)_- ^2}{2\bar{\sigma}^2}+M^{\left( t \right)}\left( \hat{V} +\frac{\left( M -\mu \right)_+ ^2}{2\bar{\sigma}^2} \right) \\ 
		&-V \circ \log \frac{V}{\hat{V}}-\frac{\left( M - \mu \right) ^2}{2\bar{\sigma}^2}.
	\end{align}
	
	By subtracting Eq.  (\ref{objective-func-m}) from Eq.  (\ref{auxiliary-func-m}), we get
%	\begin{align}
%		\nonumber
%	&	G\left( m,m^{\left( t \right)} \right) -u\left( m \right)\\ \nonumber
%		=&Vm^{\left( t \right)}\left( V_a+\frac{\left( m_a-\mu \right) ^2}{2\bar{\sigma}^2} \right) \\ 
%		&-MV\left( \log \frac{V}{V_a}-\frac{\left( m_a-\mu \right) ^2}{2\bar{\sigma}^2} \right).
%	\end{align}
	
		\begin{align}
		\nonumber
		&	G\left( M,M^{\left( t \right)} \right) -\mathcal{U} \left( M \right)\\ \nonumber
		=&M^{\left( t \right)}\left( \hat{V} +\frac{\left( M-\mu \right)_+ ^2}{2\bar{\sigma}^2} \right) \\ 
		&-M\left( \log \frac{V}{\hat{V}}-\frac{\left( M-\mu \right)_- ^2}{2\bar{\sigma}^2} \right).
	\end{align}

	Thus we can infer that
	%\begin{align}
	%	m^{\left( t \right)}V_a>M\log \frac{V}{V_a}.
	%\end{align}
	\begin{align}
		M^{\left( t \right)}\hat{V} > M\log \frac{V}{\hat{V}}.
	\end{align}
	
	The gradient of the auxiliary function with respect of $M_{ij}$ is
%	\begin{align}
%		\nonumber
%		&\frac{\partial G\left( m,m_a \right)}{\partial m_a}\\ 
%		\nonumber
%		=&m^{\left( t \right)}V\left( {\color{red}\frac{V}{\hat{V}}} \left( H^T \circ M \right) \circ W + {\color{red}\frac{V}{\hat{V}}} (W \circ M) \circ H^{T}+\frac{\left( M-\mu \right) ^2}{\bar{\sigma}^2} \right)\\ 
%		&-VM\left({\color{red}\textbf{1}} \left( H^T \circ M \right) \circ W + {\color{red}\textbf{1}} (W \circ M) \circ H^{T}-\frac{\left( M-\mu \right) ^2}{\bar{\sigma}^2} \right).
%	\end{align}
	
	\begin{align}
		\nonumber
		&\frac{\partial G\left( M,M^{(t)} \right)}{\partial M_{ij}}\\ 
		\nonumber
		=&\left[M^{\left( t \right)}\left( \frac{V}{\hat{V}} \left( H^T \circ M \right) \circ W \right. \right. \\
		\nonumber
		&+ \left. \frac{V}{\hat{V}} (W \circ M) \circ H^{T}+\frac{\left( M-\mu \right)_+ ^2}{\bar{\sigma}^2} \right) \\ 
		\nonumber
		&- M\left( \textbf{1} \left( H^T \circ M \right) \circ W \right. \\
		&+ \left. \left. \textbf{1} (W \circ M) \circ H^{T}-\frac{\left( M-\mu \right)_- ^2}{\bar{\sigma}^2} \right) \right]_{ij}.
	\end{align}

	When the above gradient is 0, we can get the iterative formula of $M$ as follows:
	\begin{align}
		\nonumber
		&M^{\left( t \right)} =\\ 
		&  \frac{\left[ M \circ \left( \textbf{1} \left( H^T \circ M \right) \circ W + \textbf{1} (W \circ M) \circ H^{T}-\frac{\left( M-\mu \right)_{-} ^2}{\bar{\sigma}^2} \right) \right]_{ij}}
		{\left[ \left( \frac{V}{\hat{V}} \left( H^T \circ M \right) \circ W + \frac{V}{\hat{V}} (W \circ M) \circ H^{T}+\frac{\left( M-\mu \right)_{+} ^2}{\bar{\sigma}^2} \right) \right]_{ij}}.
	\end{align}
\end{proof}

\subsection{Time complexity}
Eq. (\ref{update-W}),  (\ref{update-H}) and  (\ref{update-M}) involves the multiplication of the two matrices shaped $\mathbb{R}^{N \times K}$ and $\mathbb{R}^{K \times N}$, so the overall computational complexity of Algorithm \ref{algo} is $\mathcal{O}( KN^2 )$.

\section{Experiment}

In this section, we demonstrate experimentally that %
our proposed method can effectively identify core-periphery structures. %

We use the evaluation method based on discrete core-periphery partitioning. In the context of community detection, a commonly used evaluation metric is Normalized mutual information (NMI)\cite{mcdaid2011normalized}, where the definition is given by:
\begin{align}
	NMI(Y,C) = \frac{2 \times I(Y;C)}{[H(Y) + H(C)]}.
\end{align}

Here, $Y$ is class labels, and $C$ is pair labels. $H(\cdot)$ is entropy and $I(Y;C)$ is the mutual information between $Y$ and $C$. The value of NMI is between 0 and 1, where 0 denotes no mutual information, and 1 denotes $Y$ and $C$ are identical. In \cite{shen2021finding}, the author further proposed a NMI metric for the core-periphery detection by considering the correctness of both pair classification and core edge classification, namely:
\begin{align}
	NMI_{cp} = \frac{1}{2}(NMI(r, \hat{r}) + NMI(c, \hat{c})),
\end{align}
where $r$ represents the true label of the core-periphery pair and $c$ represents the true classification label of the core and periphery nodes. The value of $NMI_{cp}$ is between 0 and 1, and a larger $NMI_{cp}$ means that the result of core-periphery partition is approximately close to the ground truth. We set $a$, $b$, $\overline{\sigma}$, $\hat{\sigma}$, $\hat{\mu}$, and $K$ as $5$, $10$, $1$, $1$, and $32$, by default. %

\subsection{Random networks with non-overlapping core-periphery pairs}\label{non-overlapping}

In this part, we measure our method on random networks with non-overlapping core-periphery structures. Block model  \cite{stochastic} have been widely used in complex network analysis.
Here, we adopt a similar approach proposed by Zhang et al. \cite{zhang2015identification} to generate synthetic networks with core-periphery structures.
The parameters including the proportion of core nodes in each core-periphery pair, the connection probability between core nodes or between core nodes and other nodes in the same pair, and that probability between other nodes are set to be 0.5, 0.6, and 0.6 respectively. %

We compare our method with five different algorithms on random networks, including core-periphery score maximization (CSM)\cite{shen2021finding}, Lap-Core (LC), LowRank-Core (LRC)\cite{cucuringu2016detection}, MINRES (MIN)\cite{minres}, and KM-config (KM)\cite{kojaku2017finding} algorithm. The $NMI_{cp}$ performances of different algorithms on random networks of different sizes are presented in Table \ref{benchmark}. For each value of $N$, we generated five random networks and obtained the core-periphery partition by the above six algorithms respectively, recorded the $NMI_{cp}$ of each experiment, and the result was taken as the average of these five experiments. The result shows that our model generally performs better than other methods, especially in large networks ($N \geq 2000$). 
To clearly illustrate the effectiveness of our method, we draw the core-periphery partition result for the case $N = 5000$ in Fig.  \ref{randomnetworks}.
Furthermore, we compare the runtime of our method with other methods, the results are shown in Fig. \ref{complexity}. Although our approach on CPU has a relatively high time complexity, 
when the matrix operations are accelerated by GPUs with Pytorch, the computing time can be significantly reduced. 

We also conduct a sensitivity analysis under a synthetic network of size $N = 5000$ on hyper-parameters, namely, $a$, $b$, $\overline{\sigma}$, $\hat{\sigma}$, $\hat{\mu}$, and $K$. For each hyperparameter, we apply different changes (-30\%, -20\%, -10\%, 0\%, 10\%, 20\%, 30\%) with respect to its predefined value, to study the sensitivity of the model for each hyperparameter. For each change of a hyperparameter, we conduct 10 independent experiments and record the average $NMI_{cp}$. The averaged results are shown in Table \ref{sensitivity_analysis}. We also calculate the mean and the standard deviation of results for each hyperparameter, % to visually demonstrate the influence of hyperparameter on experimental results, 
as shown in Fig.\ref{sensitivity}. In general, our model can maintain good and stable performance under fluctuated hyperparameter values.

\begin{table}[!t]  
	\caption{$NMI_{cp}$ for different algorithms}
	\label{benchmark}
	\centering
	\begin{tabular}{lcccccc}
		
		\toprule
		\multirow{3}{*}{N} &\multicolumn{6}{c}{${\rm NMI}_{cp}$} \\
		\cline{2-7}\\
		&NMF &CSM & KM &LC &LRC &MIN \\
		\cline{1-7}\\
		1000  &0.430 &\textbf{0.485} &0.379 &0.115 &0.095 &0.089\\ 
		2000  &\textbf{0.616} &0.538 &0.522 &0.036 &0.080 &0.083\\ 
		3000  &\textbf{0.671} &0.536 &0.503 &0.019 &0.080 &0.079\\ 
		4000  &\textbf{0.697} &0.538 &0.518 &0.017 &0.077 &0.077\\ 
		5000  &\textbf{0.760} &0.529 &0.512 &0.006 &0.051 &0.075\\ 
		6000  &\textbf{0.791} &0.534 &0.513 &0.020 &0.028 &0.074\\ 
		7000  &\textbf{0.795} &0.538 &0.529 &0.005 &0.011 &0.073\\ 
		8000  &\textbf{0.851} &0.537 &0.528 &0.004 &0.007 &0.072\\ 
		9000  &\textbf{0.880} &0.544 &0.528 &0.006 &0.005 &0.071\\ 
		10000 &\textbf{0.863} &0.540 &0.527 &0.013 &0.004 &0.070\\
		\bottomrule  
		
	\end{tabular}
	
\end{table} 

\begin{table}[!t]  
	\caption{Sensitivity analysis}
	\label{sensitivity_analysis}
	\centering
	\begin{tabular}{rcccccc}
		
		\toprule
		\multirow{3}{*}{variance} &\multicolumn{6}{c}{${\rm NMI}_{cp} (N=5000)$} \\
		\cline{2-7}\\
		&$K$ &$a$ &$b$ &$\overline{\sigma}$ &$\hat{\sigma}$ &$\hat{\mu}$ \\
		\cline{1-7}\\
		$-30\%$  &0.812 &0.803 &0.807 &0.831 &0.821 &0.828\\ 
		$-20\%$  &0.802 &0.785 &0.802 &0.814 &0.822 &0.811\\ 
		$-10\%$  &0.806 &0.830 &0.828 &0.803 &0.827 &0.809\\ 
		$  0\%$  &0.779 &0.833 &0.837 &0.806 &0.782 &0.820\\ 
		$+10\%$  &0.829 &0.824 &0.812 &0.778 &0.821 &0.823\\ 
		$+20\%$  &0.804 &0.767 &0.825 &0.789 &0.831 &0.812\\ 
		$+30\%$  &0.809 &0.836 &0.830 &0.813 &0.815 &0.804\\
		\cline{2-7}\\
		$ave$  &0.806 &0.812 &0.821 &0.805 &0.817 &0.816\\ 
		$std$  &0.014 &0.025 &0.012 &0.016 &0.015 &0.008\\ 
		\bottomrule  
		
	\end{tabular}
	
\end{table} 

\begin{figure}[!t]
	\centering %
	\includegraphics[width=0.45\textwidth]{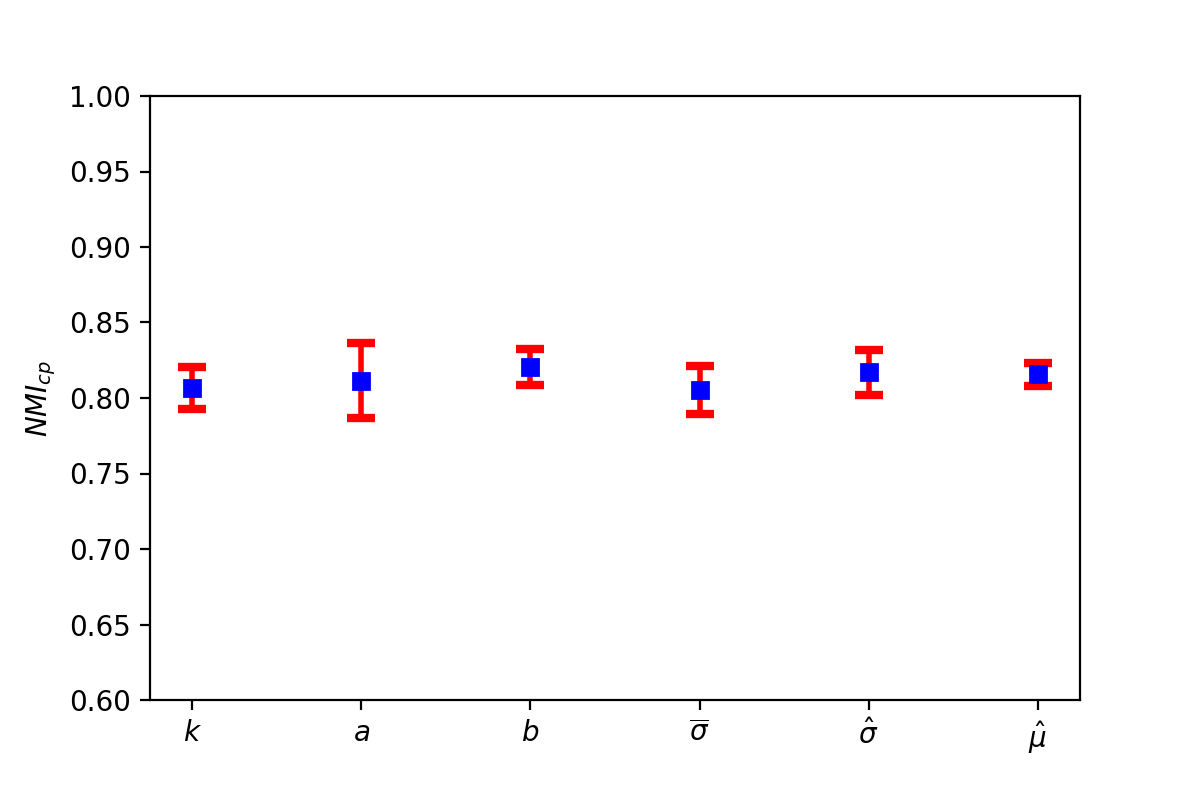} %
	\caption{
		Sensitivity analyze of  hyperparameters. For each hyperparameter, we apply a variation on it and run experiments to record  $NMI_{cp}$. We report the mean and standard deviation of the recorded results.
	}
	\label{sensitivity} %
\end{figure}%

\begin{figure}[!t]
	\centering %
	\includegraphics[width=0.45\textwidth]{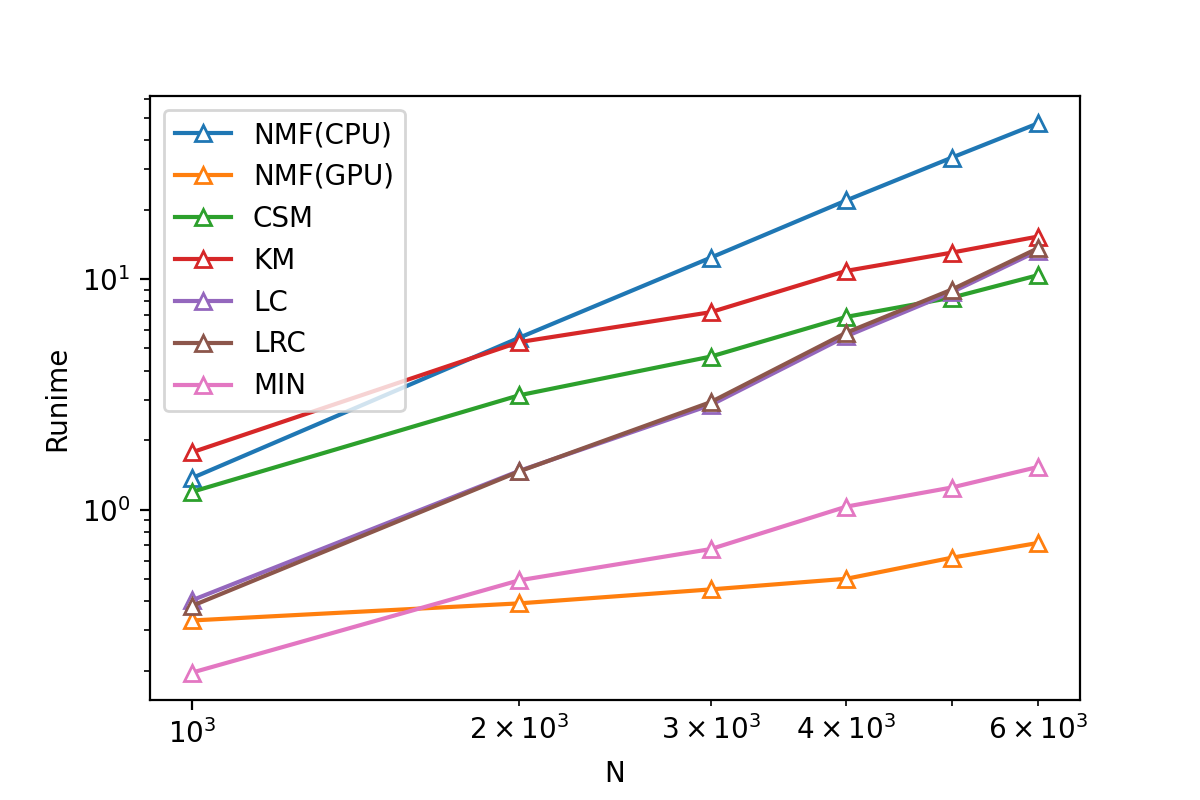} %
	\caption{Runtime of different algorithms} %
	\label{complexity} %
\end{figure}%

\begin{figure}[!t]
	\centering %
	\includegraphics[width=0.45\textwidth]{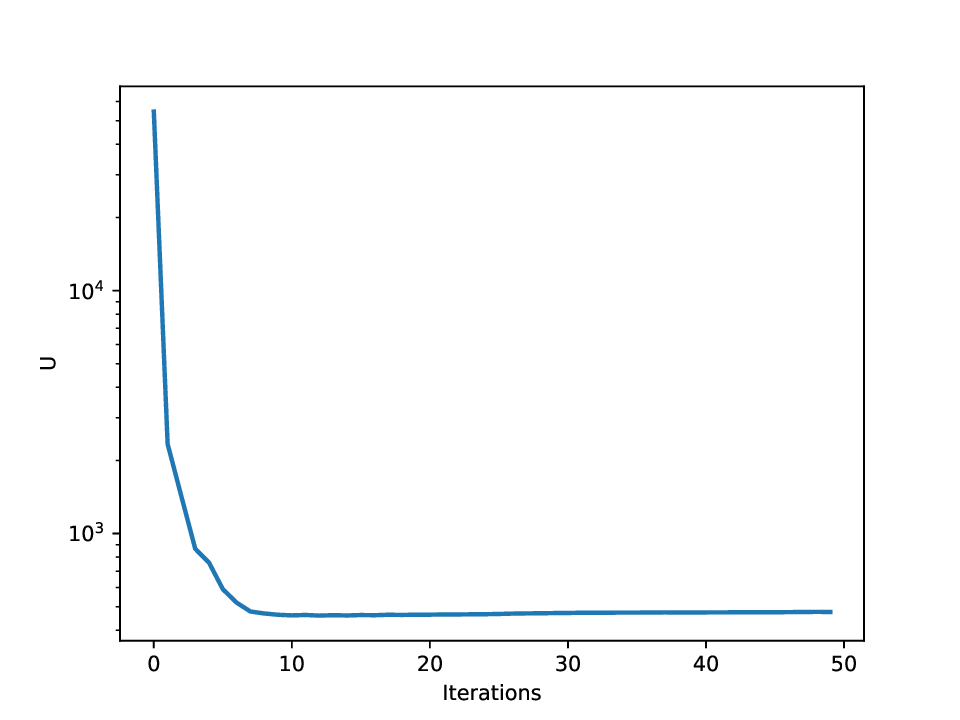} %
	\caption{
		Convergence analysis on Polbooks dataset %{\color{red} ($\kappa = 3500$)}.
	} %
	\label{convergence_experiment} %
\end{figure}%

\subsection{Random networks with overlapping core-periphery structures}

\begin{figure}[!t]
	\centering %
	\includegraphics[width=0.45\textwidth]{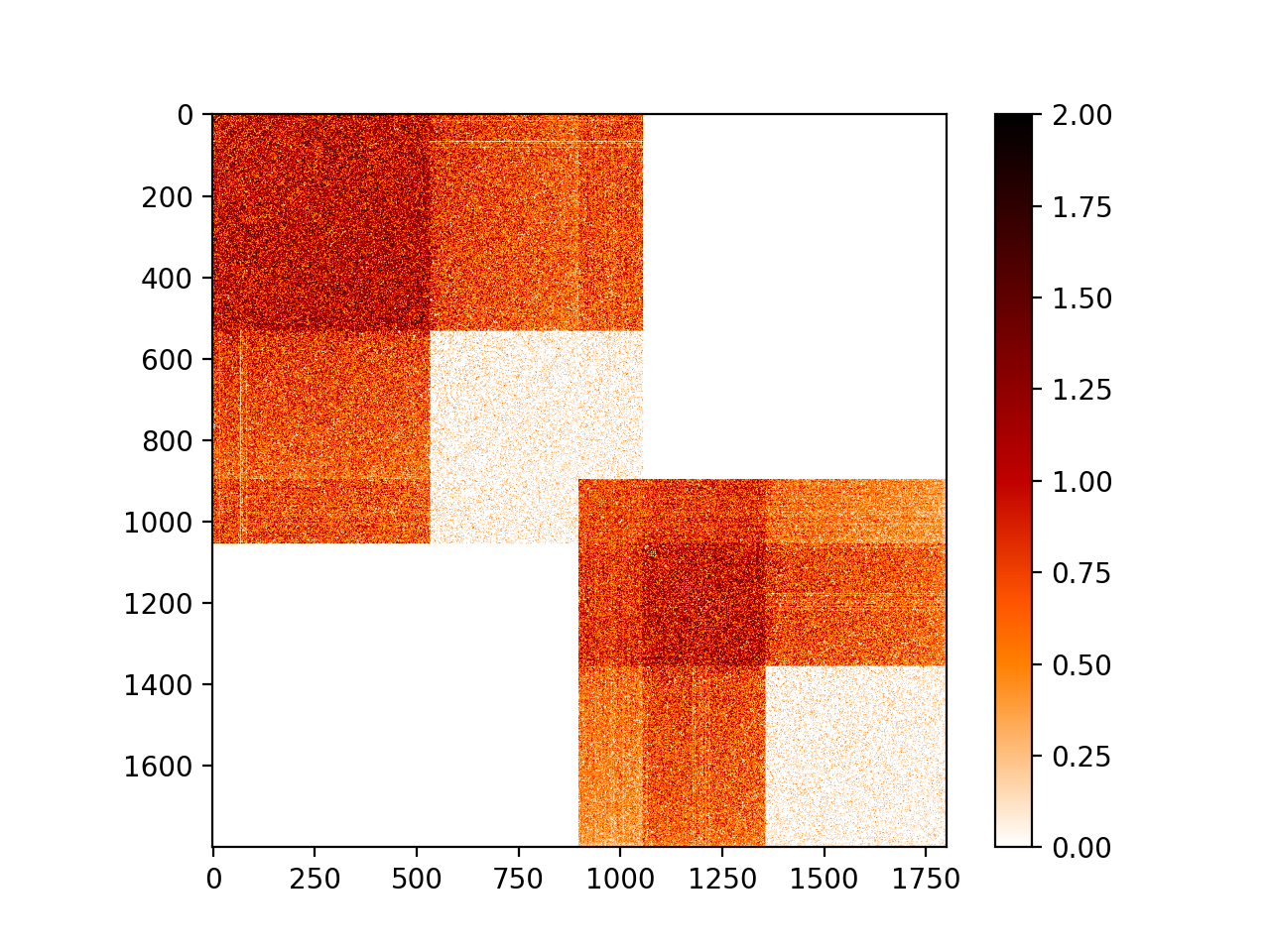} %
	\caption{Illustration of overlapping %
		core-periphery detection. Nodes are ordered with respect to $W$ and $M$, and the color intensity is obtained from $M$. %
		Darker areas indicate that corresponding nodes have higher core scores, and are more likely to be core nodes. %
		Two blocks represent two overlapping core-periphery pairs. The overlapping part in the middle indicates that, some nodes in the first pair are periphery nodes while those in the second pair are core nodes. %
	} %
	\label{overlap} %
\end{figure}%

In this section, we discuss the ability of our model to handle overlapping core-periphery structures. One common but often overlooked situation is when some node, $i$, is an periphery node in a core-periphery pair, but a core node in another core-periphery pair. Traditional methods tend to classify such a node as either a periphery node or a core node. %

Our approach can well address this problem by using the core score indicator $M$ for different pairs. To illustrate the overlapping solution, we generated a network with two core-periphery pairs based on the stochastic model described in the previous section and perform our approach. The experimental result has shown in Fig. \ref{overlap}. %
In this figure, nodes are ordered according to $W$ and $M$, and we also use a colorbar to represent the distribution of core score values. The experimental result demonstrates that our model can accurately locate core and periphery nodes in overlapping situations.

\subsection{Real-world networks}

In this section, we apply our algorithm to real-world networks and compare our algorithm with two other algorithms, i.e., CSM and KM-config. The dataset utilized is Polbooks\cite{polbooks}, Email-Eu-core\cite{email-eu-core1,email-eu-core2} and ego-Facebook\cite{ego-Facebook}. 

The Polbooks dataset is a network of books about U.S. politics published close to the 2004 U.S. presidential election and sold by Amazon.com.Each node represents a book, and edges between books represent frequent co-purchasing of those books by the same buyers. The network is comprised of 105 nodes and 441 edges. 

The email-Eu core network, which was generated using email data collected from a European research institution, is comprised of 1005 nodes and 25571 edges, where each node represents a person, each edge denotes that there is at least one email sent from one person to another correspondingly. %

The ego-Facebook network, which was collected from survey participants using this Facebook app, is comprised of 4039 nodes and 88234 edges. Each node represents a user, and each edge represents a social relation.

To evaluate our method, we reorder the adjacency matrix according to the output of different algorithms and display the sorting results in the form of Fig.  \ref{realnetwork-polbooks}, \ref{realnetwork-email}, and \ref{realnetwork-facebook}. We compare with traditional approaches CSM and KM by discretizing our output as described in the method section. 
We rearrange the adjacency matrix according to the algorithm output by grouping nodes within the same core-periphery pair, placing nodes in larger core-periphery pairs in front of those in smaller pairs, and arranging core nodes in front of the periphery nodes within the same core-periphery pair. 

In the ego-Facebook dataset, the overlapping between different core-periphery pairs is not very significant, so we use the same representation scheme as that for Fig. \ref{randomnetworks}, in which each red rectangle represent a core-periphery pair, and darker color indicate a higher core score. 
In Polbooks and Email-Eu-core datasets, there are significant  overlaps between different core-periphery pairs. When we use different colors to indicate different pairs respectively for results with NMF, it can be find that our approach can identify the overlapping correctly represented by mixed colors, while other approaches typically ignores such overlapping. 
These results show that, compared with traditional approaches, our approach has  advantages in identifying overlapping and non-overlapping core-periphery pairs. 

We conduct a numerical analysis to demonstrate the convergence rate on the Polbooks dataset, and the result is shown in Fig. \ref{convergence_experiment}. The experimental result indicates that after several iterations, the value of the objective function, Eq. (\ref{lossfunc}), gradually decrease to a constant.

\begin{figure}[htbp]

	\subfloat[NMF]{
		\centering
		\includegraphics[width=3.2in]{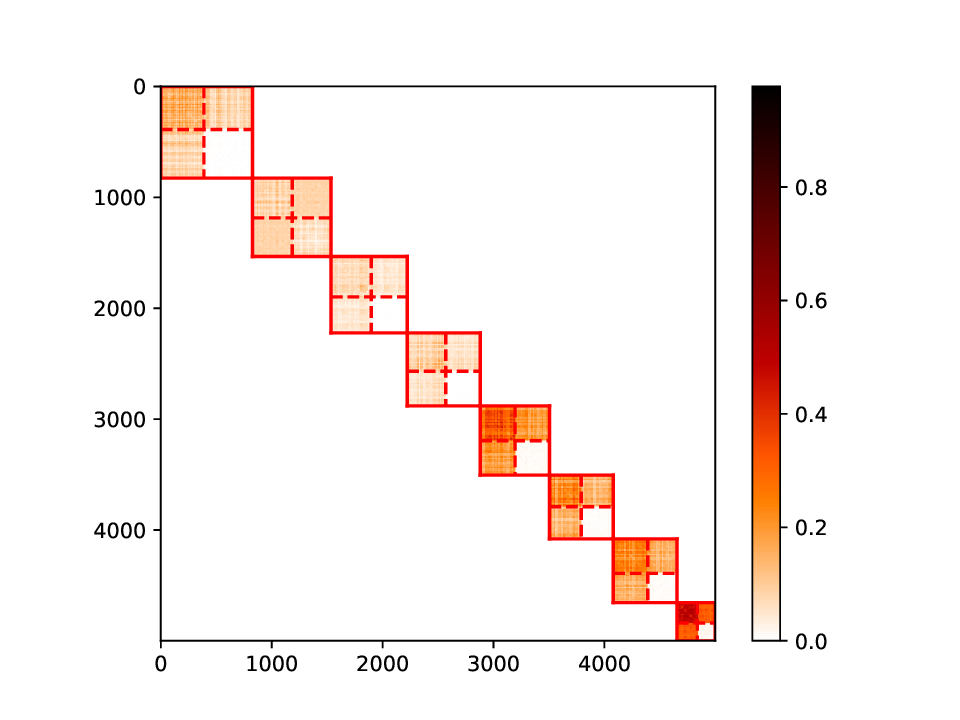}
	}\\
	\subfloat[CSM]{
		\centering
		\includegraphics[width=3.2in]{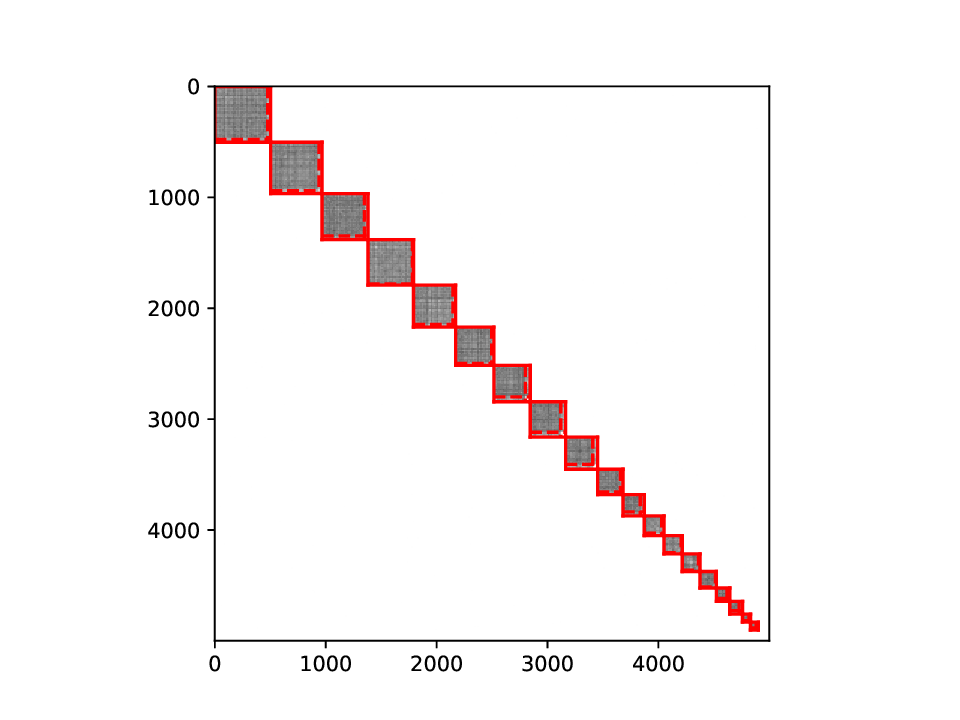}
	}\\
	\subfloat[KM]{
		\centering
		\includegraphics[width=3.2in]{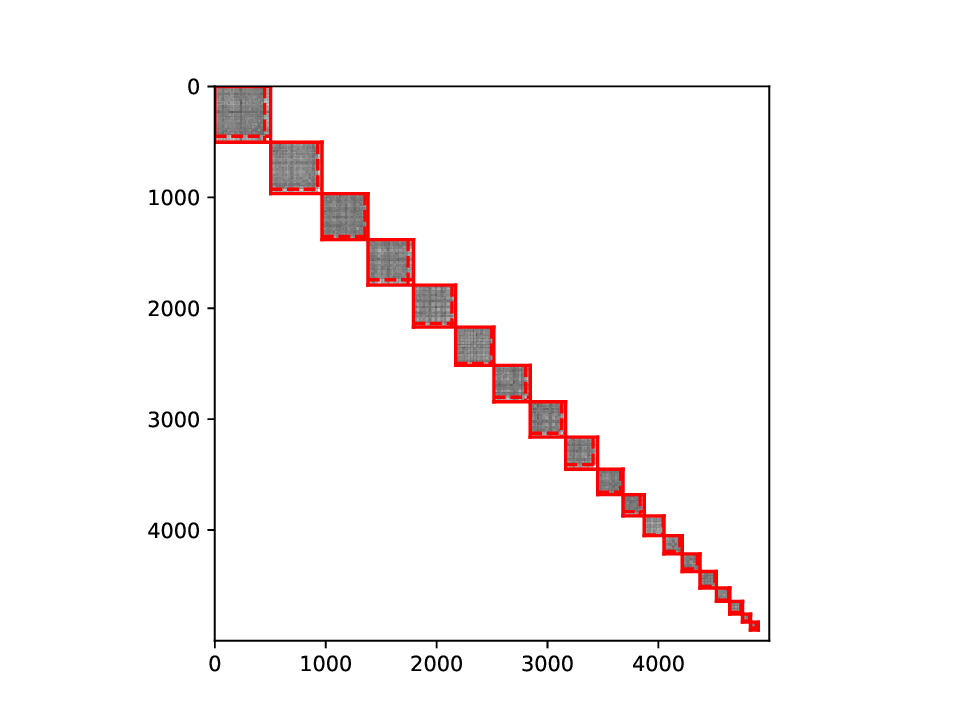}
	}\\
	\caption{Core-periphery detection on a random network ($N = 5000$) by Non-negative Matrix Factorization, core-periphery score maximization, and KM-config. Rectangles represent different core-periphery pairs, and the dashed lines divide the core and periphery regions.}
	\label{randomnetworks}
\end{figure}

\begin{figure}[htbp]
	\centering
	\subfloat[NMF]{
		\includegraphics[width=3.2in]{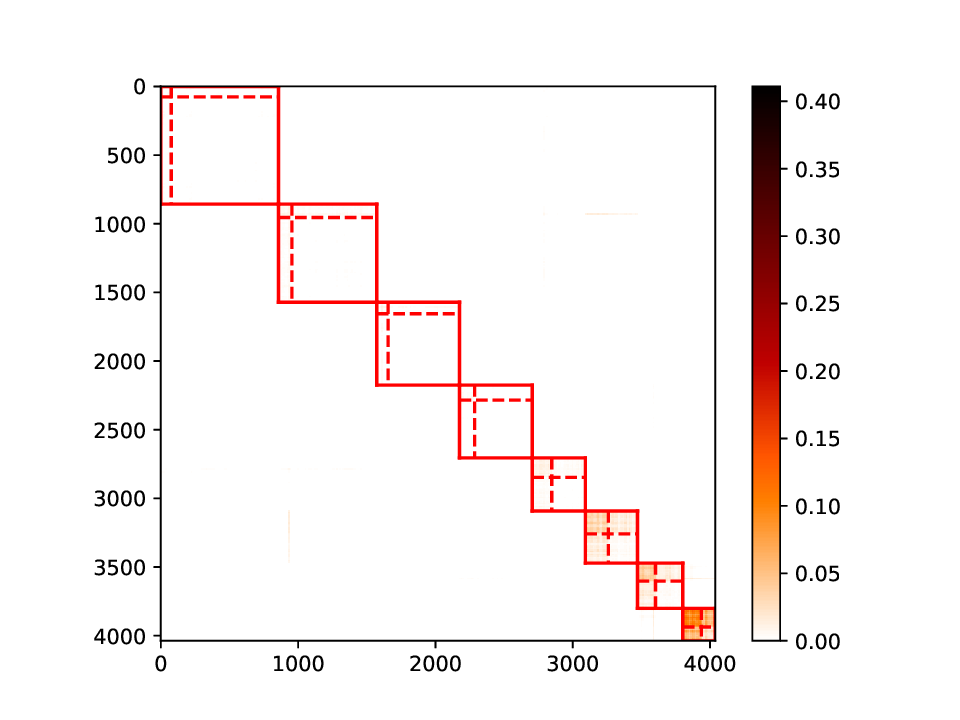}
	}\\
	\subfloat[CSM]{
		\includegraphics[width=3.2in]{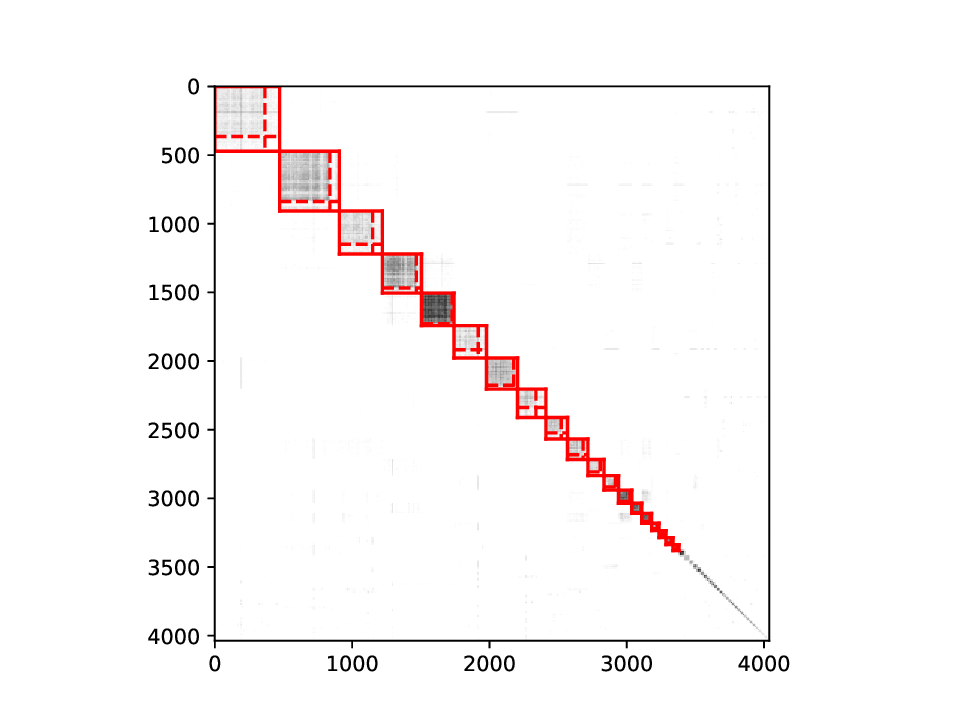}
	}\\
	\subfloat[KM]{
		\includegraphics[width=3.2in]{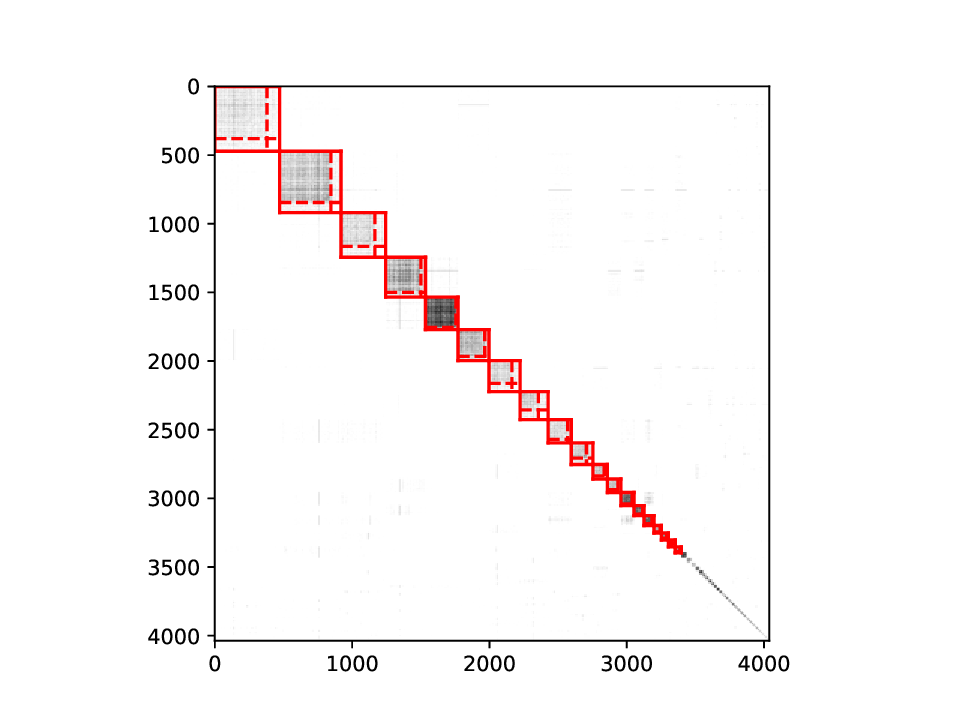}
	}\\
	\caption{Core-periphery detection on ego-Facebook by NMF, CSM and KM-config. Rectangles represent different core-periphery pairs, and the dashed lines divide the core and periphery regions.}
	\label{realnetwork-facebook}
\end{figure}

\begin{figure}[htbp]
	\centering
	\subfloat[NMF]{
		\includegraphics[width=3.2in]{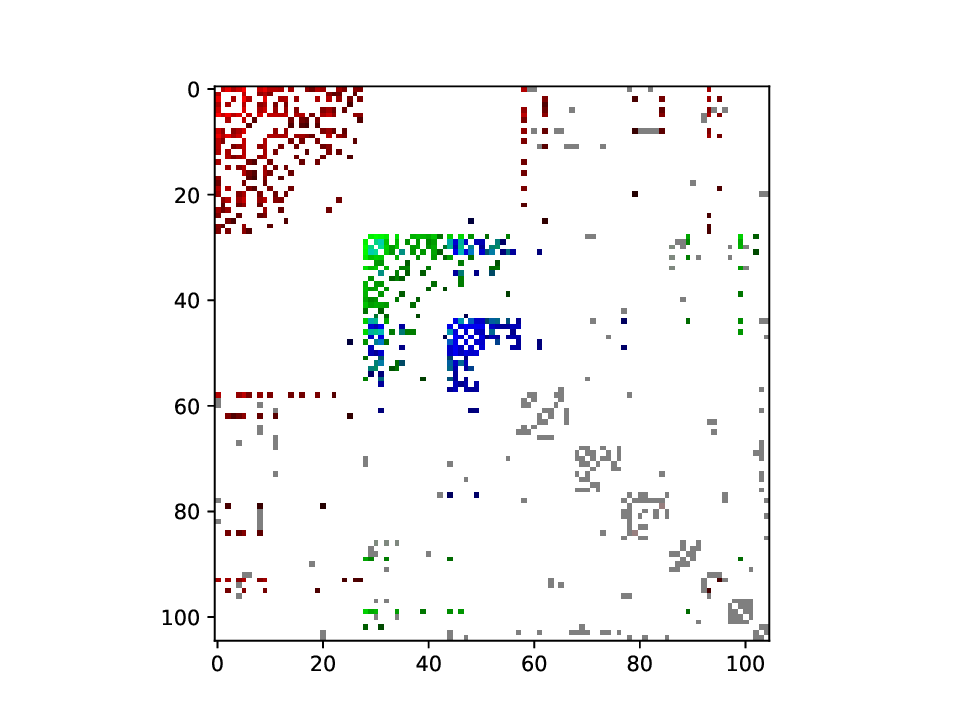}
	}\\
	\subfloat[CSM]{
		\includegraphics[width=3.2in]{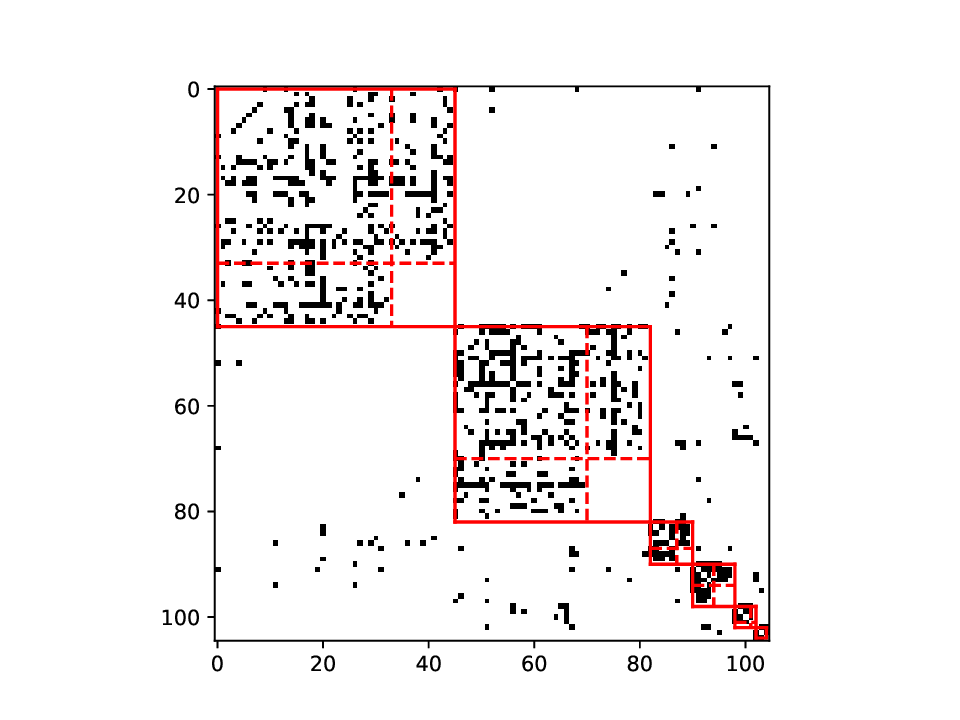}
	}\\
	\subfloat[KM]{
		\includegraphics[width=3.2in]{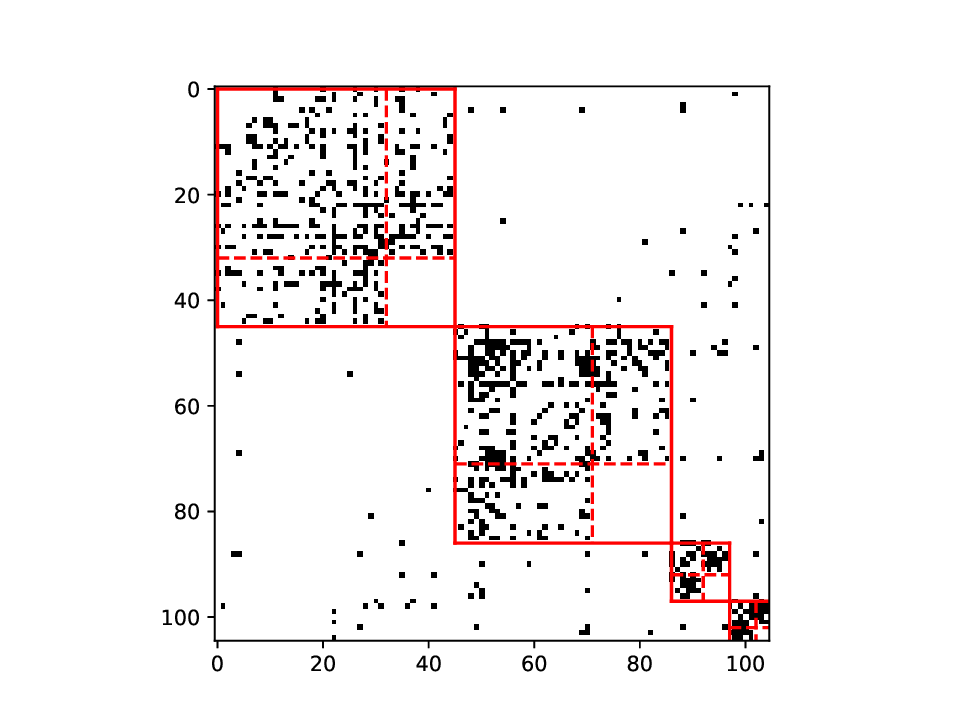}
	}\\
	\caption{
		The core-periphery detection results using NMF, CSM, and KM-config on the polbook dataset. In the result by  NMF, red, green, and blue colors represents three largest  core-periphery pairs respectively, and the mixing of these colors indicates core-periphery overlapping. In the  results by CSM and KM, rectangles represent different core-periphery pairs, with dashed lines separating the core and periphery regions.
	}
	\label{realnetwork-polbooks}
\end{figure}

\begin{figure}[htbp]
	\centering
	\subfloat[NMF]{
		\includegraphics[width=3.2in]{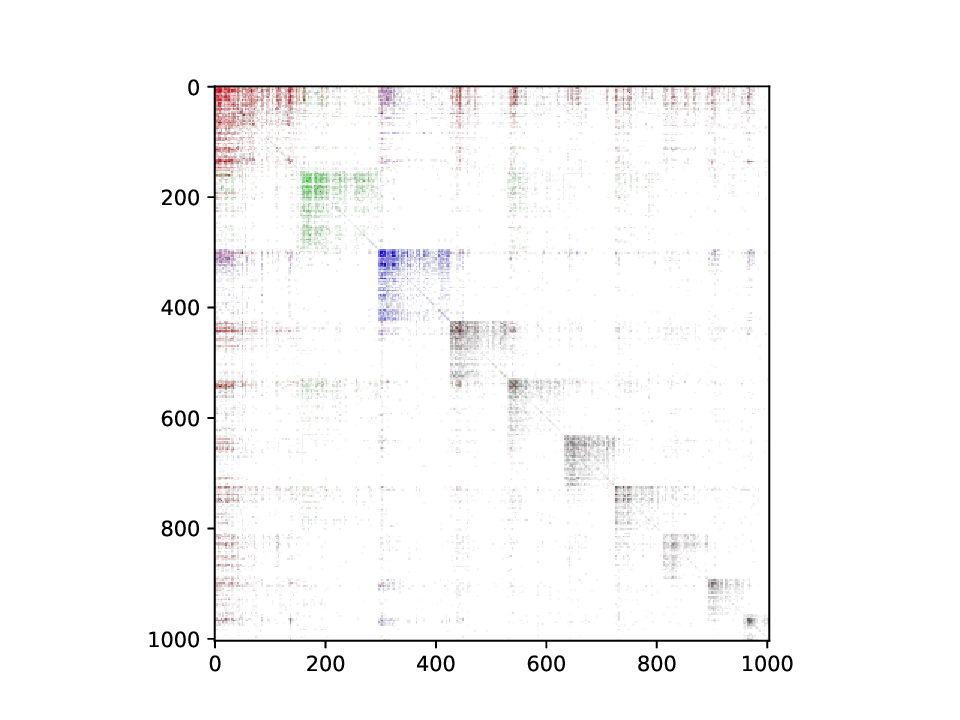}
	}\\
	\subfloat[CSM]{
		\includegraphics[width=3.2in]{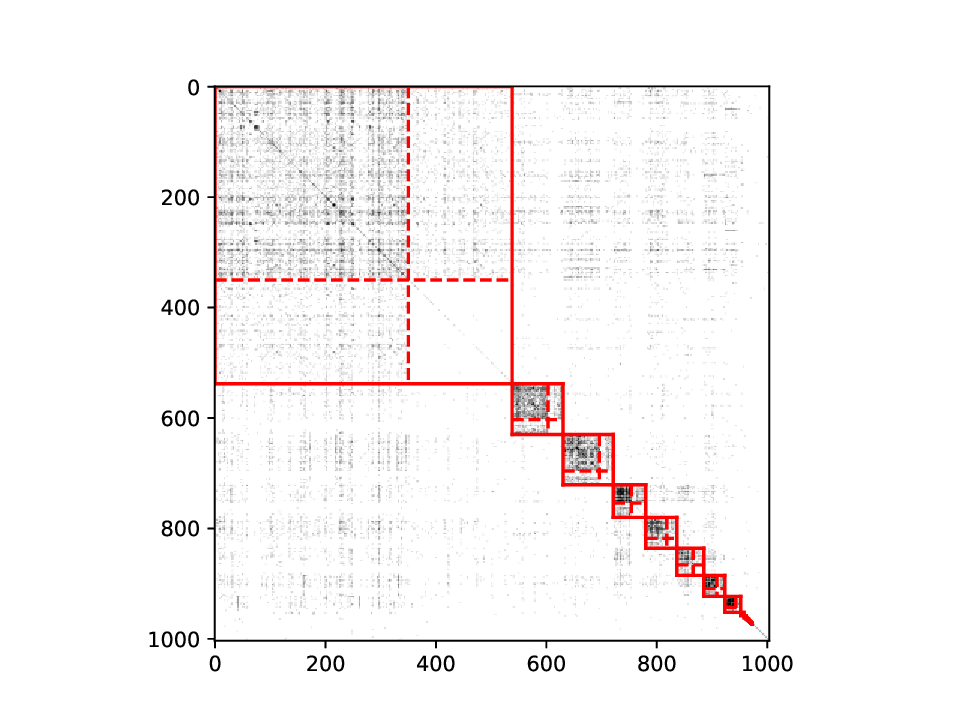}
	}\\
	\subfloat[KM]{
		\includegraphics[width=3.2in]{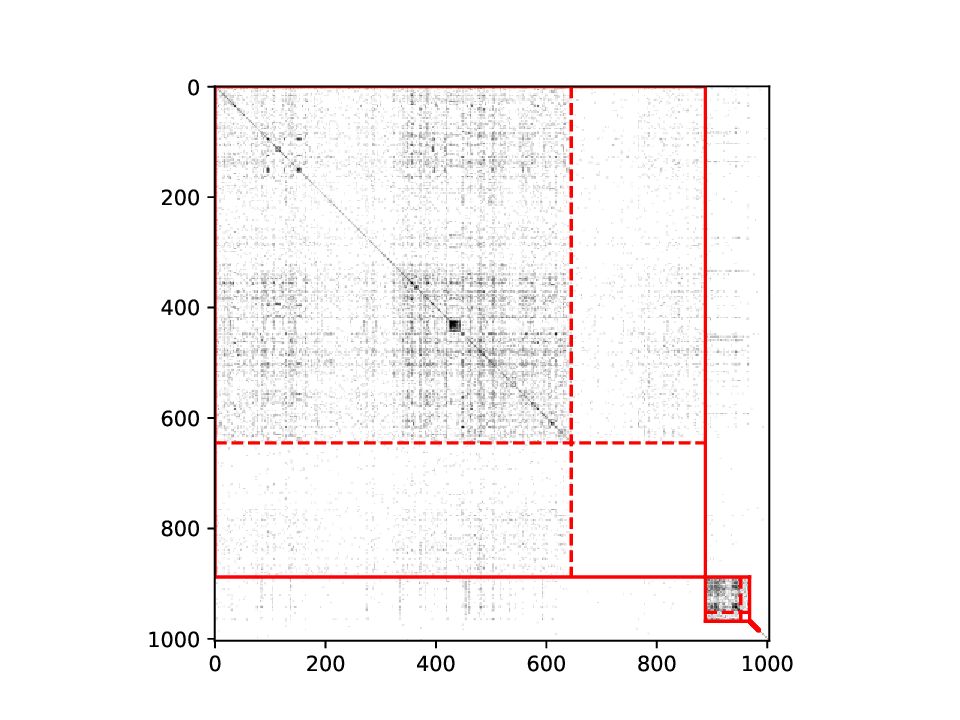}
	}\\
	\caption{
		The core-periphery detection results using NMF, CSM, and KM-config on the Email-Eu-core dataset. In the result by  NMF, red, green, and blue colors represents three largest  core-periphery pairs respectively, and the mixing of these colors indicates core-periphery overlapping. In the  results by CSM and KM, rectangles represent different core-periphery pairs, with dashed lines separating the core and periphery regions.
\label{realnetwork-email}
}
\end{figure}

\section{Conclusion}
In this paper, we propose a generative model called masked Bayesian non-negative matrix factorization, for detecting core-periphery structures. We propose an approach to optimize the model parameters and prove its convergence. Besides the ability of traditional approaches, our method can provide soft partitions and core scores, and it is capable to identify overlapping core-periphery pairs.
In the experiments, our approach can outperform traditional methods in different scenarios. Code of the proposed approach is available at https://github.com/HazwRuHi/Masked\_Bayesian\_NMF.

\bibliographystyle{IEEEtran}
\bibliography{mybib}

% Generated by IEEEtran.bst, version: 1.14 (2015/08/26)
\begin{thebibliography}{10}
\providecommand{\url}[1]{#1}
\csname url@samestyle\endcsname
\providecommand{\newblock}{\relax}
\providecommand{\bibinfo}[2]{#2}
\providecommand{\BIBentrySTDinterwordspacing}{\spaceskip=0pt\relax}
\providecommand{\BIBentryALTinterwordstretchfactor}{4}
\providecommand{\BIBentryALTinterwordspacing}{\spaceskip=\fontdimen2\font plus
\BIBentryALTinterwordstretchfactor\fontdimen3\font minus
  \fontdimen4\font\relax}
\providecommand{\BIBforeignlanguage}[2]{{%
\expandafter\ifx\csname l@#1\endcsname\relax
\typeout{** WARNING: IEEEtran.bst: No hyphenation pattern has been}%
\typeout{** loaded for the language `#1'. Using the pattern for}%
\typeout{** the default language instead.}%
\else
\language=\csname l@#1\endcsname
\fi
#2}}
\providecommand{\BIBdecl}{\relax}
\BIBdecl

\bibitem{social}
A.~B. Tickner, ``Core, periphery and (neo) imperialist international
  relations,'' \emph{European Journal of International Relations}, vol.~19,
  no.~3, pp. 627--646, 2013.

\bibitem{social_biological}
M.~Girvan and M.~E. Newman, ``Community structure in social and biological
  networks,'' \emph{Proceedings of the national academy of sciences}, vol.~99,
  no.~12, pp. 7821--7826, 2002.

\bibitem{brain_networks}
E.~Bullmore and O.~Sporns, ``Complex brain networks: graph theoretical analysis
  of structural and functional systems,'' \emph{Nature reviews neuroscience},
  vol.~10, no.~3, pp. 186--198, 2009.

\bibitem{newman2004finding}
M.~E. Newman and M.~Girvan, ``Finding and evaluating community structure in
  networks,'' \emph{Physical review E}, vol.~69, no.~2, p. 026113, 2004.

\bibitem{borgatti2000models}
S.~P. Borgatti and M.~G. Everett, ``Models of core/periphery structures,''
  \emph{Social networks}, vol.~21, no.~4, pp. 375--395, 2000.

\bibitem{peixoto2014hierarchical}
T.~P. Peixoto, ``Hierarchical block structures and high-resolution model
  selection in large networks,'' \emph{Physical Review X}, vol.~4, no.~1, p.
  011047, 2014.

\bibitem{krugman1991increasing}
P.~Krugman, ``Increasing returns and economic geography,'' \emph{Journal of
  political economy}, vol.~99, no.~3, pp. 483--499, 1991.

\bibitem{CPcollab}
M.~P. Rombach, M.~A. Porter, J.~H. Fowler, and P.~J. Mucha, ``Core-periphery
  structure in networks,'' \emph{SIAM Journal on Applied mathematics}, vol.~74,
  no.~1, pp. 167--190, 2014.

\bibitem{CPeconomic}
C.~A. Hidalgo, B.~Klinger, A.-L. Barab{\'a}si, and R.~Hausmann, ``The product
  space conditions the development of nations,'' \emph{Science}, vol. 317, no.
  5837, pp. 482--487, 2007.

\bibitem{CPtraffic}
T.~Verma, F.~Russmann, N.~A. Ara{\'u}jo, J.~Nagler, and H.~J. Herrmann,
  ``Emergence of core--peripheries in networks,'' \emph{Nature communications},
  vol.~7, no.~1, p. 10441, 2016.

\bibitem{sarkar2022core}
S.~Sarkar, A.~Bhagwat, and A.~Mukherjee, ``A core-periphery structure-based
  network embedding approach,'' \emph{Social Network Analysis and Mining},
  vol.~12, no.~1, p.~32, 2022.

\bibitem{CPtrade}
G.~Fagiolo, J.~Reyes, and S.~Schiavo, ``The evolution of the world trade web: a
  weighted-network analysis,'' \emph{Journal of Evolutionary Economics},
  vol.~20, pp. 479--514, 2010.

\bibitem{kojaku2017finding}
S.~Kojaku and N.~Masuda, ``Finding multiple core-periphery pairs in networks,''
  \emph{Physical Review E}, vol.~96, no.~5, p. 052313, 2017.

\bibitem{brusco2011exact}
M.~Brusco, ``An exact algorithm for a core/periphery bipartitioning problem,''
  \emph{Social Networks}, vol.~33, no.~1, pp. 12--19, 2011.

\bibitem{zhang2015identification}
X.~Zhang, T.~Martin, and M.~E. Newman, ``Identification of core-periphery
  structure in networks,'' \emph{Physical Review E}, vol.~91, no.~3, p. 032803,
  2015.

\bibitem{minres}
J.~P. Boyd, W.~J. Fitzgerald, M.~C. Mahutga, and D.~A. Smith, ``Computing
  continuous core/periphery structures for social relations data with
  minres/svd,'' \emph{Social Networks}, vol.~32, no.~2, pp. 125--137, 2010.

\bibitem{yan2019multicores}
B.~Yan and J.~Luo, ``Multicores-periphery structure in networks,''
  \emph{Network Science}, vol.~7, no.~1, pp. 70--87, 2019.

\bibitem{da2008centrality}
M.~R. Da~Silva, H.~Ma, and A.-P. Zeng, ``Centrality, network capacity, and
  modularity as parameters to analyze the core-periphery structure in metabolic
  networks,'' \emph{Proceedings of the IEEE}, vol.~96, no.~8, pp. 1411--1420,
  2008.

\bibitem{shen2021finding}
X.~Shen, Y.~Han, W.~Li, K.-C. Wong, and C.~Peng, ``Finding core--periphery
  structures in large networks,'' \emph{Physica A: Statistical Mechanics and
  its Applications}, vol. 581, p. 126224, 2021.

\bibitem{jia2019random}
J.~Jia and A.~R. Benson, ``Random spatial network models for core-periphery
  structure,'' in \emph{Proceedings of the Twelfth ACM International Conference
  on Web Search and Data Mining}, 2019, pp. 366--374.

\bibitem{jebara2012machine}
T.~Jebara, \emph{Machine learning: discriminative and generative}.\hskip 1em
  plus 0.5em minus 0.4em\relax Springer Science \& Business Media, 2012, vol.
  755.

\bibitem{xiang2018unified}
B.-B. Xiang, Z.-K. Bao, C.~Ma, X.~Zhang, H.-S. Chen, and H.-F. Zhang, ``A
  unified method of detecting core-periphery structure and community structure
  in networks,'' \emph{Chaos: An Interdisciplinary Journal of Nonlinear
  Science}, vol.~28, no.~1, p. 013122, 2018.

\bibitem{ma2018detection}
C.~Ma, B.-B. Xiang, H.-S. Chen, M.~Small, and H.-F. Zhang, ``Detection of
  core-periphery structure in networks based on 3-tuple motifs,'' \emph{Chaos:
  An Interdisciplinary Journal of Nonlinear Science}, vol.~28, no.~5, 2018.

\bibitem{rezaei2020ml}
M.~Rezaei~Ravari, M.~Eftekhari, and F.~Saberi~Movahed, ``Ml-ck-elm: An
  efficient multi-layer extreme learning machine using combined kernels for
  multi-label classification,'' \emph{Scientia Iranica}, vol.~27, no.~6, pp.
  3005--3018, 2020.

\bibitem{karami2023unsupervised}
S.~Karami, F.~Saberi-Movahed, P.~Tiwari, P.~Marttinen, and S.~Vahdati,
  ``Unsupervised feature selection based on variance-covariance subspace
  distance,'' \emph{Neural Networks}, 2023.

\bibitem{lee2014density}
S.~H. Lee, M.~Cucuringu, and M.~A. Porter, ``Density-based and transport-based
  core-periphery structures in networks,'' \emph{Physical Review E}, vol.~89,
  no.~3, p. 032810, 2014.

\bibitem{9718598}
X.~Liu, S.~Ye, G.~Fiumara, and P.~De~Meo, ``Influential spreaders
  identification in complex networks with topsis and k-shell decomposition,''
  \emph{IEEE Transactions on Computational Social Systems}, vol.~10, no.~1, pp.
  347--361, 2023.

\bibitem{shen2021influences}
X.~Shen, S.~Aliko, Y.~Han, J.~I. Skipper, and C.~Peng, ``Finding core-periphery
  structures with node influences,'' \emph{IEEE Transactions on Network Science
  and Engineering}, vol.~9, no.~2, pp. 875--887, 2021.

\bibitem{berry2005email}
M.~W. Berry and M.~Browne, ``Email surveillance using non-negative matrix
  factorization,'' \emph{Computational \& Mathematical Organization Theory},
  vol.~11, pp. 249--264, 2005.

\bibitem{xu2003document}
W.~Xu, X.~Liu, and Y.~Gong, ``Document clustering based on non-negative matrix
  factorization,'' in \emph{Proceedings of the 26th annual international ACM
  SIGIR conference on Research and development in informaion retrieval}, 2003,
  pp. 267--273.

\bibitem{lee2000algorithms}
D.~Lee and H.~S. Seung, ``Algorithms for non-negative matrix factorization,''
  \emph{Advances in neural information processing systems}, vol.~13, 2000.

\bibitem{gonzalez2005accelerating}
E.~F. Gonzalez and Y.~Zhang, ``Accelerating the lee-seung algorithm for
  nonnegative matrix factorization,'' Tech. Rep., 2005.

\bibitem{zdunek2006non}
R.~Zdunek and A.~Cichocki, ``Non-negative matrix factorization with
  quasi-newton optimization.''\hskip 1em plus 0.5em minus 0.4em\relax Springer.

\bibitem{wang2011community}
F.~Wang, T.~Li, X.~Wang, S.~Zhu, and C.~Ding, ``Community discovery using
  nonnegative matrix factorization,'' \emph{Data Mining and Knowledge
  Discovery}, vol.~22, pp. 493--521, 2011.

\bibitem{shi2015community}
X.~Shi, H.~Lu, Y.~He, and S.~He, ``Community detection in social network with
  pairwisely constrained symmetric non-negative matrix factorization,'' in
  \emph{Proceedings of the 2015 IEEE/ACM International Conference on Advances
  in Social Networks Analysis and Mining 2015}, 2015, pp. 541--546.

\bibitem{psorakis2011overlapping}
I.~Psorakis, S.~Roberts, M.~Ebden, and B.~Sheldon, ``Overlapping community
  detection using bayesian non-negative matrix factorization,'' \emph{Physical
  Review E}, vol.~83, no.~6, p. 066114, 2011.

\bibitem{9146784}
D.~Kamuhanda, M.~Wang, and K.~He, ``Sparse nonnegative matrix factorization for
  multiple-local-community detection,'' \emph{IEEE Transactions on
  Computational Social Systems}, vol.~7, no.~5, pp. 1220--1233, 2020.

\bibitem{burkardt2014truncated}
J.~Burkardt, ``The truncated normal distribution,'' \emph{Department of
  Scientific Computing Website, Florida State University}, vol.~1, p.~35, 2014.

\bibitem{mcdaid2011normalized}
A.~F. McDaid, D.~Greene, and N.~Hurley, ``Normalized mutual information to
  evaluate overlapping community finding algorithms,'' \emph{arXiv preprint
  arXiv:1110.2515}, 2011.

\bibitem{stochastic}
\BIBentryALTinterwordspacing
P.~W. Holland, K.~B. Laskey, and S.~Leinhardt, ``Stochastic blockmodels: First
  steps,'' \emph{Social Networks}, vol.~5, no.~2, pp. 109--137, 1983. [Online].
  Available:
  \url{https://www.sciencedirect.com/science/article/pii/0378873383900217}
\BIBentrySTDinterwordspacing

\bibitem{cucuringu2016detection}
M.~Cucuringu, P.~Rombach, S.~H. Lee, and M.~A. Porter, ``Detection of
  core--periphery structure in networks using spectral methods and geodesic
  paths,'' \emph{European Journal of Applied Mathematics}, vol.~27, no.~6, pp.
  846--887, 2016.

\bibitem{polbooks}
R.~Rossi and N.~Ahmed, ``The network data repository with interactive graph
  analytics and visualization,'' in \emph{Proceedings of the AAAI conference on
  artificial intelligence}, vol.~29, no.~1, 2015.

\bibitem{email-eu-core1}
H.~Yin, A.~R. Benson, J.~Leskovec, and D.~F. Gleich, ``Local higher-order graph
  clustering,'' in \emph{Proceedings of the 23rd ACM SIGKDD international
  conference on knowledge discovery and data mining}, 2017, pp. 555--564.

\bibitem{email-eu-core2}
J.~Leskovec, J.~Kleinberg, and C.~Faloutsos, ``Graph evolution: Densification
  and shrinking diameters,'' \emph{ACM transactions on Knowledge Discovery from
  Data (TKDD)}, vol.~1, no.~1, pp. 2--es, 2007.

\bibitem{ego-Facebook}
J.~Leskovec and J.~Mcauley, ``Learning to discover social circles in ego
  networks,'' \emph{Advances in neural information processing systems},
  vol.~25, 2012.

\end{thebibliography}

\end{document}